\def\colorschemesepia{sepia}
\def\colorschemedark{dark}
\def\colorschemelight{light}
\let\colorscheme\colorschemelight
\colorlet{textColor}{black}
\colorlet{bgColor}{white}
\definecolor{textColor}{HTML}{433423}
\definecolor{bgColor}{HTML}{fbf0da}
\definecolor{textColor}{HTML}{bdc1c6}
\definecolor{bgColor}{HTML}{202124}
\definecolor{textBlue}{HTML}{8ab4f8}
\definecolor{textRed}{HTML}{f9968b}
\definecolor{textGreen}{HTML}{81e681}
\definecolor{textPurple}{HTML}{c58af9}
\colorlet{textBlue}{blue!50!black}
\colorlet{textRed}{red!50!black}
\colorlet{textGreen}{green!50!black}
\definecolor{textPurple}{HTML}{681da8}
\colorlet{dimColor}{textColor!50!bgColor}
\title{Proportional and Pareto-Optimal Allocation\\of Chores with Subsidy%
\thanks{Jugal Garg and Eklavya Sharma were supported by NSF Grant CCF-2334461.}}
\author{
Jugal Garg%
\thanks{Department of Industrial \& Enterprise Engineering, University of Illinois at Urbana-Champaign, USA}
\\ \texttt{\small jugal@illinois.edu}
\and
Eklavya Sharma\footnotemark[2]
\\ \texttt{\small eklavya2@illinois.edu}
\and
Xiaowei Wu\thanks{University of Macau}\\ \texttt{\small xiaoweiwu@um.edu.mo}
}
\date{\empty}
\newcommand*{\defeq}{:=}
\newcommand*{\WLoG}{Without loss of generality}
\newcommand*{\wLoG}{without loss of generality}
\newcommand*{\odd}{odd}
\newcommand*{\floor}[1]{\lfloor #1 \rfloor}
\newcommand*{\ceil}[1]{\lceil #1 \rceil}
\DeclareMathOperator*{\E}{\mathbb{E}}
\DeclareMathOperator*{\argmax}{argmax}
\newcommand*{\optprog}[3]{
\begin{array}{>{\displaystyle}c*2{>{\displaystyle}l}}
#1 & \multicolumn{2}{>{\displaystyle}l}{#2}
#3 \end{array}}
\g@addto@macro{\UrlBreaks}{%
\do\/%
\do\a\do\b\do\c\do\d\do\e\do\f\do\g\do\h\do\i\do\j\do\k\do\l\do\m%
\do\n\do\o\do\p\do\q\do\r\do\s\do\t\do\u\do\v\do\w\do\x\do\y\do\z%
\do\A\do\B\do\C\do\D\do\E\do\F\do\G\do\H\do\I\do\J\do\K\do\L\do\M%
\do\N\do\O\do\P\do\Q\do\R\do\S\do\T\do\U\do\V\do\W\do\X\do\Y\do\Z%
\do\0\do\1\do\2\do\3\do\4\do\5\do\6\do\7\do\8\do\9%
}
\newcommand*{\Ical}{\mathcal{I}}
\newcommand*{\Icalhat}{\widehat{\Ical}}
\newcommand*{\chat}{\widehat{c}}
\newcommand*{\dhat}{\widehat{d}}
\newcommand*{\xhat}{\widehat{x}}
\DeclareMathOperator{\optSub}{optSub}
\DeclareMathOperator{\rcost}{rcost}
\newtheorem{theorem}{Theorem}
\newtheorem{definition}{Definition}
\newtheorem{example}{Example}
\newtheorem{lemma}{Lemma}
\tikzset{
vertex/.style = {draw, inner sep=0, outer sep=0, minimum size=1.6em, fill={textColor!5!bgColor}, thick},
myArrow/.style = {->,>={Stealth}},
agentNode/.style = {vertex,circle},
itemNode/.style = {vertex,rectangle},
rNode/.style = {fill={red!5!bgColor},draw={textRed},text={textRed}},
gNode/.style = {fill={green!5!bgColor},draw={textGreen},text={textGreen}},
bNode/.style = {fill={blue!5!bgColor},draw={textBlue},text={textBlue}},
}
\let\citet\cite
\let\citep\cite
\newcommand*{\ifTwoColumn}[2]{#2}
\begin{document}

\maketitle
\setlength{\parskip}{0.3em}

\begin{abstract}
We consider the problem of allocating $m$ indivisible chores among $n$ agents with possibly different weights, aiming for a solution that is both fair and efficient. Specifically, we focus on the classic fairness notion of proportionality and efficiency notion of Pareto-optimality. Since proportional allocations may not always exist in this setting, we allow the use of \emph{subsidies} (monetary compensation to agents) to ensure agents are proportionally-satisfied, and aim to minimize the total subsidy required.
\texorpdfstring{\citet{wu2024tree}}{Wu and Zhou (WINE 2024)} showed that when each chore has disutility at most 1, a total subsidy of at most $n/3 - 1/6$ is sufficient to guarantee proportionality. However, their approach is based on a complex
technique, which does not guarantee economic efficiency --- a key desideratum in fair division.

In this work, we give a polynomial-time algorithm that achieves the same subsidy bound while also ensuring Pareto-optimality. Moreover, both our algorithm and its analysis are significantly simpler than those of
\texorpdfstring{\citet{wu2024tree}}{Wu and Zhou (WINE 2024)}.
Our approach first computes a proportionally-fair competitive equilibrium, and then applies a rounding procedure guided by minimum-pain-per-buck edges.

\end{abstract}

\section{Introduction}
\label{sec:intro}

We study the problem of allocating $m$ indivisible chores among $n$ agents in a manner that satisfies both fairness and efficiency. It arises in various practical settings, such as assigning tasks among employees or teams, or distributing responsibilities like waste-management and environment-protection among governments or industries. In many such contexts, agents may have different \emph{weights} or \emph{obligations}, reflecting differences in capacity or responsibility. For example, a team with more employees might be expected to handle a larger share of tasks, and an industry that generates more waste should bear a greater burden of waste-management. Formally, each agent $i$ is associated with a weight $w_i$, where the weights sum to 1.

Two classical notions of fairness in such settings are \emph{proportionality} (PROP) and \emph{envy-freeness} (EF).
PROP requires that each agent $i$ receives no more than a $w_i$ fraction of the total burden, according to her own disutility function.
Formally, let $M$ denote the set of chores, and let each agent $i$ have a \emph{disutility function} $d_i: 2^M \to \mathbb{R}_{\ge 0}$, which assigns a cost to each subset of chores.
An allocation $A = (A_1, \ldots, A_n)$ that assigns each agent $i$ the bundle $A_i \subseteq M$ is said to be PROP if $d_i(A_i) \le w_id_i(M)$ for each agent $i$.
Agent $i$ \emph{envies} agent $j$ in an allocation $A$ if $i$ thinks that
$j$'s bundle's disutility is less than $w_j/w_i$ times her own bundle's disutility.
Formally, $i$ envies $j$ if
\[ \frac{d_i(A_i)}{w_i} > \frac{d_i(A_j)}{w_j}. \]
Allocation $A$ is said to be EF if no agent envies another agent.

Although PROP and EF are conceptually compelling fairness notions,
allocations satisfying them do not always exist.
For example, when there are 5 identical chores and two agents with equal weights,
some agent would receive at least 3 chores, and that agent
would not be proportionally-satisifed, and would envy the other agent.
However, we can still aim for \emph{approximate fairness};
giving three chores to one agent and two to the other is, intuitively, as fair as possible.
Hence, notions of approximate fairness have been extensively studied.
These notions are obtained by \emph{weakening} or \emph{relaxing} the definitions of EF and PROP.
Well-known relaxations of EF include EF1 and EFX
\citep{lipton2004approximately,caragiannis2019unreasonable,springer2024almost,chakraborty2021weighted,bhaskar2021on},
and well-known relaxations of PROP include
PROPx \citep{li2022almost}, PROP1 \citep{aziz2020polynomial},
MMS \citep{budish2011combinatorial,huang2023reduction,akrami2024breaking},
and APS \citep{babaioff2023fair}.
\citet{amanatidis2023fair} give a survey of research on approximate fairness notions.

Instead of approximate fairness, one can try to achieve exact fairness through the use of money.
\citet{halpern2019fair} study a setting where the items to be allocated
are goods (i.e., items of non-negative utility) instead of chores,
and each agent must be paid a \emph{subsidy} such that
the resulting allocation of goods and subsidies is envy-free, and the total subsidy is minimized.
Following their work, \emph{envy-free division with subsidy} has been extensively studied
for many different settings
\citep{brustle2020one,barman2022achieving,kawase2024towards,goko2024fair,elmalem2025whoever,goko2024fair}.

\citet{wu2024one,wu2024tree} introduced the closely related \emph{PROP-subsidy} problem, which is the focus of our work.
Here chores and subsidy must be allocated to the agents such that the resulting allocation is PROP.
Formally, if agent $i$ is assigned a bundle $A_i$, she receives a subsidy of $\max(0, d_i(A_i) - w_id_i(M))$,
where $M$ is the set of all chores. Our goal is to find an allocation $A$ that minimizes the total subsidy.
How large can the total subsidy be in the worst case?
The following example shows a lower bound of $n/4$ on the worst-case subsidy.

\begin{example}
\label{thm:prop-subsidy-chores-hard}
Consider $n$ agents with equal weights and $n/2$ chores,
where each agent's disutility for a set $S$ of chores is given by $|S|$.
In this instance, each agent's proportional share is $1/2$,
and any allocation results in a total subsidy of at least $n/4$.
\end{example}

When agents have equal weights and additive disutility functions%
\footnote{\label{foot:additive}A function $f: 2^M \to \mathbb{R}$ is called \emph{additive}
if $f(S) = \sum_{c \in S} f(\{c\})$ for all $S \subseteq M$.},
and each chore has disutility of at most 1 for each agent,
\citet{wu2024one} gave a polynomial-time allocation algorithm
that guarantees a total subsidy of at most $n/4$, thus matching the lower bound.
For unequal weights, they gave an algorithm with a total subsidy of at most $(n-1)/2$.
Subsequently, \citet{wu2024tree} improved this result by
designing an intricate algorithm that reduces the total subsidy to at most $n/3 - 1/6$.

In addition to fairness, achieving economic efficiency is a key objective in fair division.
\emph{Pareto optimality} is the standard notion used to capture economic efficiency.
An allocation $A$ is said to \emph{Pareto-dominate} another allocation $B$
if no one prefers $B$ and someone prefers $A$, i.e.,
$d_i(A_i) \le d_i(B_i)$ for every agent $i$ and $d_i(A_i) < d_i(B_i)$ for some agent $i$.
An allocation is \emph{Pareto-optimal} (PO) if it is not Pareto-dominated by any other allocation.
\Cref{ex:prop-not-po} illustrates the importance of efficiency.

\begin{example}
\label{ex:prop-not-po}
Consider two agents with equal weights and additive disutility functions.
There are four chores whose disutilities are given by the following table.

\begin{center}
\begin{tabular}{ccccc}
\toprule & $c_1$ & $c_2$ & $c_3$ & $c_4$
\\ \midrule
$d_1(\cdot)$ & $1$ & $1$ & $100$ & $100$
\\ $d_2(\cdot)$ & $100$ & $100$ & $1$ & $1$
\\ \bottomrule
\end{tabular}
\end{center}

Each agent's proportional share is 101.
The allocation $(\{c_1, c_3\}, \{c_2, c_4\})$ gives each agent a disutility of 101, so it is PROP.
However, it is not PO, since it is Pareto-dominated by another allocation $(\{c_1, c_2\}, \{c_3, c_4\})$,
where each agent gets a disutility of 2.
Clearly, the second allocation is significantly better than the first.
\end{example}

Obtaining PO along with approximate fairness has been an area of intense research
(e.g., \citep{barman2018finding,aziz2020polynomial,garg2022fair,garg2023new,lin2025approximately,mahara2025existence}),
which gained significant momentum following the Kalai-Prize-winning \cite{kalai2024} work by
\citet{caragiannis2019unreasonable} showing the existence of allocations that are both EF1 and PO.
However, to the best of our knowledge, achieving PO with low subsidy
has not been studied before for either EF or PROP.
For EF-subsidy, it is unclear whether a PO allocation with $O(n)$ subsidy exists.
For PROP-subsidy, existence is clear: if allocation $A$ Pareto-dominates allocation $B$,
then the PROP-subsidy of $A$ is at most that of $B$.
However, finding a Pareto-dominator of an allocation is coNP-hard \cite{dekeijzer2009on},
so finding a low-subsidy PO allocation in polynomial time has been an open problem.

\subsection{Our Contribution}

In this work, we study the PROP-subsidy problem in the general setting
of unequal weights and additive disutilities.
We show that the best-known upper bound of $n/3 - 1/6$
on the total subsidy by \citet{wu2024tree} can be obtained via a Pareto-optimal allocation,
and that such an allocation can be computed in polynomial time.
Thus, we achieve efficiency without compromising on fairness.
Moreover, our results are obtained using significantly simpler techniques;
\citet{wu2024tree}'s paper is around 30 pages long,
of which 11 pages are dedicated to proving the correctness of their rounding scheme.

We achieve efficiency and simplicity through a clever use of \emph{market equilibrium}.
Known algorithms for minimizing subsidy \citep{wu2024one,wu2024tree} proceed in two phases.
\begin{enumerate}
\item \textbf{Fractional allocation}:
    Compute a \emph{fractional} proportional allocation $x$, where each agent $i$ is assigned a fraction $x_{i,c}$ of each chore $c$, ensuring proportionality.
\item \textbf{Rounding}:
    \emph{Round} $x$ into an integral allocation $A$, where agent $i$ receives chore $c$ only if $x_{i,c} > 0$.
    The goal is to keep the rounding cost low, i.e., the disutility of each agent's bundle in $A$ should be, on average, close to the disutility in the fractional allocation $x$.
\end{enumerate}

Prior works \citep{wu2024one,wu2024tree} use a greedy algorithm called \emph{fractional-bid-and-take} in the first phase. In contrast, we compute a fractional allocation by finding a competitive (market) equilibrium. A market equilibrium is a pair $(x, p)$ where $x$ is a fractional allocation and $p \in \mathbb{R}_{>0}^m$ is a \emph{payment vector}, where $p$ assigns to each chore a \emph{payment} representing its aggregate disutility across all agents. Market equilibria are well-known to be closely related to Pareto-optimality
\citep{mas1995microeconomic16,branzei2024algorithms,bogomolnaia2017competitive},
and have been used extensively to obtain fairness and efficiency simultaneously in allocation problems
\citep{barman2018finding,aziz2020polynomial,garg2022fair,garg2023new,lin2025approximately,mahara2025existence}.

The key to our algorithm's success lies in the introduction of a novel intermediate step between the fractional allocation and rounding, which we call \emph{reduction}. This step reduces the rounding problem to a special case where all agents have identical disutility functions, significantly simplifying the rounding procedure. The reduction step crucially utilizes the payment vector from the market equilibrium, demonstrating the utility of market equilibria beyond their traditional connection to Pareto-optimality.

\subsection{Related Work}

Unless specified otherwise, we assume that all agents have additive disutilities.

\paragraph{PROP-Subsidy for Goods.}
An analogous problem is the fair allocation of goods,
where the items have \emph{utility} or \emph{value}.
Formally, $M$ is the set of all goods,
and each agent $i$ has a \emph{valuation function} $v_i: 2^M \to \mathbb{R}_{\ge 0}$.
Allocation $A$ is proportional if $v_i(A_i) \ge w_iv_i(M)$.
Agent $i$'s subsidy in $A$ is $\max(0, w_iv_i(M) - v_i(A_i))$.
Prior work on PROP-subsidy also holds for goods;
when each good has a value of at most 1 to each agent,
$n/4$ subsidy suffices for equal weights \citep{wu2024one}
and $n/3 - 1/6$ subsidy suffices for unequal weights \citep{wu2024tree}.
One can show that the lower bound of $n/4$ on the worst-case total subsidy
also holds for goods by appropriately modifying \cref{thm:prop-subsidy-chores-hard}.

\paragraph{EF-Subsidy.}
In the EF-subsidy problem, we pay each agent $i$ a subsidy of $s_i$
to make envy disappear, and we want to minimize the total subsidy $\sum_{i=1}^n s_i$.
Formally, the subsidy vector $(s_1, \ldots, s_n)$ is feasible for allocation $A$ if
for any two agents $i$ and $j$, we have
\[ \frac{d_i(A_i) - s_i}{w_i} \le \frac{d_i(A_j) - s_j}{w_j}. \]
We assume that each chore has a disutility of at most 1 to each agent.

Note that EF-subsidy and PROP-subsidy have a fundamental difference:
in the EF-subsidy problem, agents care about the subsidies paid to other agents,
whereas in the PROP-subsidy problem, they don't.

If there are $n-1$ identical chores and $n$ agents, a subsidy of $n-1$ is required to get EF.
For chores and equal agent weights, \citet{wu2024one} give an algorithm that
achieves EF with a subsidy of at most $n-1$, thus matching the lower bound.

One can define envy-freeness and EF-subsidy for goods analogously.
In fact, EF-subsidy is much more well-studied for goods than for chores.
If there is just one good, a subsidy of $n-1$ is required.
When agents have equal weights, \citet{halpern2019fair} show that
a subsidy of $m(n-1)$ suffices to get EF.
\citet{brustle2020one} improved the upper-bound to $n-1$, thus matching the lower bound.
\citet{barman2022achieving,kawase2024towards,goko2024fair} study EF-subsidy
when agents' valuation functions may not be additive.
\citet{elmalem2025whoever} study EF-subsidy and unequal agent weights.
\citet{goko2024fair} study EF-subsidy with truthfulness.

\paragraph{Relaxations of EF and PROP}
When money cannot be used to achieve fairness, notions of approximate fairness are used.
These notions are obtained by \emph{weakening} or \emph{relaxing} the definitions of EF and PROP.
\citet{amanatidis2023fair} give a survey of research on approximate fairness notions.

EF was initially relaxed to a notion called \emph{EF1} (envy-freeness up to one item),
and algorithms were designed to guarantee EF1 allocations
\citep{budish2011combinatorial,lipton2004approximately,bhaskar2021on,chakraborty2021weighted,springer2024almost}.
A stronger relaxation of EF, called \emph{EFX} (envy-free up to any item),
was introduced in \citet{caragiannis2019unreasonable}.
For unequal weights, EFX is infeasible \citep{springer2024almost}, and for equal weights,
despite significant efforts, the existence of EFX allocations remains an open problem.
Consequently, relaxations of EFX \citep{caragiannis2023new,amanatidis2020multiple,chaudhury2021little}
as well as the existence of EFX in special cases \citep{chaudhury2024efx,plaut2020almost,amanatidis2021maximum}
have been explored.

PROP1 (proportional up to one item) is a well-known relaxation of PROP,
and its existence was easy to prove \citep{aziz2021fair}.
For chores, a stronger notion called PROPx (proportional up to any item)
was also shown to be feasible \citep{li2022almost},
but for goods, PROPx was shown to be infeasible \cite{aziz2020polynomial}, even for equal weights.
Hence, notions like PROPm \citep{baklanov2021achieving,baklanov2021propm}
and PROPavg \citep{kobayashi2025proportional}, which are stronger than PROP1
but weaker than PROPx were shown to be feasible.
For equal weights, MMS (maximin share) \citep{budish2011combinatorial} is another relaxation of PROP
that was (surprisingly) shown to be infeasible \citep{kurokawa2018fair,feige2022tight},
so its relaxations have been studied
\citep{kurokawa2018fair,ghodsi2018fair,barman2020approximation,akrami2024breaking,akrami2023randomized,caragiannis2023new}.
AnyPrice Share (APS) \citep{babaioff2023fair} is another relaxation of PROP
that is inspired by MMS but was designed specifically to handle unequal agent weights.

Most fairness notions were originally defined for the case of equal agent weights and goods.
\citet{garg2025exploring} extend definitions of such fairness notions to more general settings
and explore how different notions are related to each other.

\paragraph{Mixed Divisible and Indivisible Items.}
Subsidy can be seen as a divisible good.
Many works have studied the allocation of a mix of indivisible and divisible goods,
by appropriately relaxing EF and PROP for this setting.
\citet{liu2024mixed} give a survey of such results.

\section{Preliminaries}
\label{sec:prelims}

For any $t \in \mathbb{Z}_{\ge 0}$, we use $[t]$ to denote the set $\{1, 2, \ldots, t\}$.
For any $x \in \mathbb{R}$, we denote $\max(0, x)$ by $(x)^+$.

\subsection{Fair Division Instances and Allocations}

Formally, our input is a \emph{fair division instance}
$\Ical$, which is denoted by $(N, M, (d_i)_{i \in N}, (w_i)_{i \in N})$.
Here $N$ denotes the set of \emph{agents}, $M$ denotes the set of \emph{chores},
$d_i: 2^M \to \mathbb{R}_{\ge 0}$ is called agent $i$'s \emph{disutility function}
(i.e., $d_i(S)$ is a measure of how much agent $i$ dislikes the set $S$ of chores),
and $w_i \in \mathbb{R}_{>0}$ is called agent $i$'s \emph{weight} or \emph{obligation}.
We have $\sum_{i \in N} w_i = 1$.

We often assume \wLoG{} that $N = [n]$ and $M = [m]$.
A function $f: 2^{[m]} \to \mathbb{R}$ is said to be \emph{additive} if
$f(S) = \sum_{c \in S} f(\{c\})$ for all $S \subseteq [m]$.
In this work, we assume that all disutility functions are additive.
For notational convenience, for $c \in [m]$, we often write $d_i(c)$ instead of $d_i(\{c\})$.
Moreover, we assume that $d_i(c) > 0$ for each agent $i$ and chore $c$
(otherwise we can just allocate $c$ to an agent $i$ where $d_i(c) = 0$
and remove $c$ from further consideration.)

A fair division instance is \emph{bounded} if $d_i(c) \le 1$
for every agent $i$ and every chore $c$.
We can assume \wLoG{} that the fair division instance given to us as input is bounded,
since we can scale the disutilities by any factor.

For instance $\Ical \defeq ([n], [m], (d_i)_{i=1}^n, (w_i)_{i=1}^n)$,
a \emph{fractional allocation} $x \in [0, 1]^{n \times m}$ is a matrix where
$\sum_{i=1}^n x_{i,c} = 1$ for each $c \in [m]$.
$x_{i,c}$ denotes the fraction of chore $c$ allocated to agent $i$.
Define $x_i \defeq (x_{i,c})_{c=1}^m$ to be agent $i$'s \emph{bundle} in $x$.
For any agent $i \in [n]$ and vector $z \in \mathbb{R}^m$,
define $d_i(z) \defeq \sum_{c=1}^m (z_c\cdot d_i(c))$.

The \emph{consumption graph} of a fractional allocation $x \in [0, 1]^{n \times m}$, denoted as $G(x)$,
is an undirected bipartite graph $([n], [m], E)$, where $(i, c) \in E \iff x_{i,c} > 0$.

A fractional allocation $x$ is called \emph{integral}
if $x_{i,c} \in \{0, 1\}$ for all $i \in [n]$ and $c \in [m]$.
Alternatively, we can express the integral allocation as a tuple $A \defeq (A_1, \ldots, A_n)$,
where $A_i \defeq \{c \in [m]: x_{i,c} = 1\}$ is the set of chores allocated to agent $i$,
called agent $i$'s \emph{bundle}.

An integral allocation $y \in \{0, 1\}^{n \times m}$ is said to be a \emph{rounding}
of a fractional allocation $x \in [0, 1]^{n \times m}$ iff for all $i \in [n]$ and $c \in [m]$,
we have $y_{i,c} = 1 \implies x_{i,c} > 0$.
Equivalently, $y$ is a rounding of $x$ iff $G(y)$ is a subgraph of $G(x)$.

\subsection{Fairness and Subsidy}

Let $x$ be a fractional allocation for the fair division instance
$\Ical \defeq ([n], [m], (d_i)_{i=1}^n, (w_i)_{i=1}^n)$.
Then $x$ is said to be \emph{proportional} or \emph{PROP-fair} if
$d_i(x_i) \le w_id_i([m])$ for all $i \in [n]$.

In a fractional allocation $x$, the PROP-subsidy of agent $i$ is defined to be $(d_i(x_i) - w_id_i([m]))^+$.
This is the amount of money that must be paid to agent $i$
to compensate for her bundle not being proportionally-fair to her.
The PROP-subsidy of $x$ is the sum of all agents' subsidies, i.e.,
\[ \sum_{i=1}^n (d_i(x_i) - w_id_i([m]))^+. \]
The optimal PROP-subsidy of $\Ical$, denoted as $\optSub(\Ical)$,
is defined as the minimum PROP-subsidy across all integral allocations.
Our goal is to find upper and lower bounds on $\optSub(\Ical)$ assuming
$d_i(c) \le 1$ for all $i \in [n]$ and $c \in [m]$.
Moreover, we would like to design a polynomial-time algorithm whose output
is an allocation with low PROP-subsidy.

\paragraph{Lower bound on subsidy:}
Consider a fair division instance $\Ical$ with $n$ equally-entitled agents and $m < n$ identical chores,
where $d_i(c) = 1$ for each chore $c$ and agent $i$. Then $\optSub(\Ical) = m(n-m)/n$.
This is maximized when $m \in \{\floor{n/2}, \ceil{n/2}\}$:
for even $n$, we get $\optSub(\Ical) = n/4$,
and for odd $n$, we get $\optSub(\Ical) = (n/4)(1-n^{-2})$.

One way to obtain low PROP-subsidy is to round a fractional PROP allocation.
Let $x \in [0, 1]^{n \times m}$ be a fractional allocation for instance
$\Ical \defeq ([n], [m], (d_i)_{i=1}^n, (w_i)_{i=1}^n)$.
Let integral allocation $A$ be a rounding of $x$.
The \emph{cost of rounding} $x$ to $A$ is defined as
\[ \rcost_{\Ical}(A \mid x) \defeq \sum_{i=1}^n (d_i(A_i) - d_i(x_i))^+. \]

\subsection{Pareto-Optimality and Market Equilibria}
\label{sec:prelims:po-meq}

\begin{definition}[Pareto-optimality]
For a fair division instance $\Ical \defeq ([n], [m], (d_i)_{i=1}^n, (w_i)_{i=1}^n)$,
a fractional allocation $x$ is said to \emph{Pareto-dominate} another fractional allocation $y$ if
$d_i(x_i) \le d_i(y_i)$ for all $i \in [n]$ and $d_i(x_i) < d_i(y_i)$ for some $i \in [n]$.
An allocation is fractionally-Pareto-optimal (fPO) if it is not Pareto-dominated by any fractional allocation.
An integral allocation is Pareto-optimal (PO) if it is not Pareto-dominated by any integral allocation.
(Note that every integral fPO allocation is also PO.)
\end{definition}

Working with PO allocations is difficult, since very few mathematical characterizations of them are known,
and it is often computationally hard to find such allocations.
However, (fractional) fPO allocations can be found in polynomial time,
and there is a rich literature connecting them to \emph{market equilibria}.

\begin{definition}[Market equilibrium]
For a fair division instance $\Ical \defeq ([n], [m], (d_i)_{i=1}^n, (w_i)_{i=1}^n)$,
let $x$ be a fractional allocation and $p \in \mathbb{R}_{>0}^m$ ($p$ is called the \emph{payment vector}).
Then $(x, p)$ is called a \emph{market equilibrium} if
for every agent $i \in [n]$, there exists $\alpha_i \in \mathbb{R}_{>0}$ such that
$d_i(c) \ge \alpha_i p_c$ for all $c \in [m]$,
and $d_i(c) = \alpha_i p_c$ when $x_{i,c} > 0$.
($\alpha_i$ is called $i$'s \emph{minimum-pain-per-buck}.)
\end{definition}

For goods, fPO allocations are related to market equilibria
via the \emph{Welfare theorems} \citep{mas1995microeconomic16}.
For chores, the first welfare theorem is known to hold.

\begin{theorem}[First Welfare Theorem \cite{bogomolnaia2017competitive}]
\label{thm:first-welfare}
Let $(x, p)$ be a market equilibrium for the instance
$\Ical \defeq ([n], [m], (d_i)_{i=1}^n, (w_i)_{i=1}^n)$.
Then $x$ is an fPO allocation.
\end{theorem}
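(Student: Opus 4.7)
The plan is to prove this by contradiction using the standard market-equilibrium argument adapted for chores. Suppose $(x,p)$ is a market equilibrium but $x$ is not fPO, so there exists a fractional allocation $y$ with $d_i(y_i) \le d_i(x_i)$ for all $i \in [n]$ and $d_i(y_{i^*}) < d_i(x_{i^*})$ for some $i^* \in [n]$. I want to derive a contradiction by comparing total payments $\sum_c p_c$ computed two ways.

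The key observation is that the market equilibrium conditions let me convert each agent's disutility into a payment expression (up to the factor $\alpha_i$). Since $d_i(c) = \alpha_i p_c$ whenever $x_{i,c} > 0$, I get
\[ d_i(x_i) = \sum_{c=1}^m x_{i,c} d_i(c) = \alpha_i \sum_{c=1}^m x_{i,c} p_c. \]
For $y$, the weaker bound $d_i(c) \ge \alpha_i p_c$ only gives
\[ d_i(y_i) = \sum_{c=1}^m y_{i,c} d_i(c) \ge \alpha_i \sum_{c=1}^m y_{i,c} p_c. \]
Combining these with $d_i(y_i) \le d_i(x_i)$ and dividing by $\alpha_i > 0$ yields $\sum_c y_{i,c} p_c \le \sum_c x_{i,c} p_c$ for every $i$, and the strict-inequality version holds for $i^*$.

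Summing over $i$ and swapping the order of summation, both sides collapse to $\sum_c p_c$ since $\sum_i x_{i,c} = \sum_i y_{i,c} = 1$ for every chore $c$. This gives $\sum_c p_c < \sum_c p_c$, the desired contradiction.

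I do not anticipate a serious obstacle; the only subtle point is making sure the strict inequality carries through. This needs $\alpha_{i^*} > 0$ (guaranteed by the definition of market equilibrium) so that $d_{i^*}(y_{i^*}) < d_{i^*}(x_{i^*})$ implies strictly $\sum_c y_{i^*,c} p_c < \sum_c x_{i^*,c} p_c$. Since $p_c > 0$ for every $c$ is also part of the definition, the argument goes through cleanly.
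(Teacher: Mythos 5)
Your proof is correct and is essentially identical to the paper's: both derive $p(y_i) \le d_i(y_i)/\alpha_i \le d_i(x_i)/\alpha_i = p(x_i)$ from the equilibrium conditions, note the strict inequality for the strictly-improving agent, and sum over agents to get the contradiction $p([m]) < p([m])$. No issues.
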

\begin{proof}
Suppose $x$ is not fPO. Then a fractional allocation $y$ Pareto-dominates $x$.
Hence, for all $i \in [n]$, we get
\[ p(y_i) \le \frac{d_i(y_i)}{\alpha_i} \le \frac{d_i(x_i)}{\alpha_i} = p(x_i). \]
Moreover, $d_i(y_i) < d_i(x_i)$ for some agent $i \in [n]$. Then
\[ p(y_i) \le \frac{d_i(y_i)}{\alpha_i} < \frac{d_i(x_i)}{\alpha_i} = p(x_i). \]
Hence, $p([m]) = \sum_{i=1}^n p(y_i) < \sum_{i=1}^n p(x_i) = p([m])$,
which is a contradiction. Hence, $x$ is fPO.
\end{proof}

\section{Technical Overview}
\label{sec:overview}

Our alorithm for obtaining an fPO allocation with low subsidy
starts by computing a PROP+fPO fractional allocation \cref{thm:prop-meq}.

\begin{theorem}
\label{thm:prop-meq}
For any fair division instance over chores, we can compute a
fractional allocation $x$ and payments $p$ in polynomial time such that
$(x, p)$ is a market equilibrium, $x$ is proportional, and $G(x)$ is acyclic.
\end{theorem}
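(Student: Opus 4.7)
The plan is to proceed in two steps: first compute a specific market equilibrium that is automatically proportional, and then successively remove cycles from its consumption graph without destroying either property. For the first step, invoke a known polynomial-time algorithm (see, e.g., \cite{branzei2024algorithms}) that computes a competitive equilibrium $(x, p)$ of the weighted Fisher chore market where each agent $i$ has budget $B_i = w_i$. Beyond the MPB conditions in the paper's definition of a market equilibrium, such a Fisher equilibrium additionally satisfies $p(x_i) = w_i$ for every $i$, and hence $\sum_c p_c = \sum_i w_i = 1$. Since $\alpha_i p_c \le d_i(c)$ for every chore $c$, summing gives $\alpha_i = \alpha_i \sum_c p_c \le d_i([m])$, and therefore $d_i(x_i) = \alpha_i p(x_i) = \alpha_i w_i \le w_i d_i([m])$, so $x$ is already proportional.

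For the second step, suppose $G(x)$ contains a cycle $i_1, c_1, i_2, c_2, \ldots, i_k, c_k$ (closing back to $i_1$), and set $i_{k+1} \defeq i_1$ and $c_0 \defeq c_k$. Consider the perturbation parameterized by $\eps \in \mathbb{R}$ where, for each $j \in [k]$, we change $x_{i_j, c_j}$ by $+\eps/p_{c_j}$ and $x_{i_{j+1}, c_j}$ by $-\eps/p_{c_j}$. The sum $\sum_i x_{i, c_j}$ is preserved because $c_j$ gains $+\eps/p_{c_j}$ and loses $\eps/p_{c_j}$. Agent $i_j$'s payment $p(x_{i_j})$ is also preserved: the two changes incident to $i_j$ contribute $p_{c_j}(\eps/p_{c_j}) + p_{c_{j-1}}(-\eps/p_{c_{j-1}}) = 0$. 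Since every modified edge already lies in the MPB graph, the identity $d_{i_j}(x_{i_j}) = \alpha_{i_j} p(x_{i_j})$ continues to hold, so $(x, p)$ remains a market equilibrium and $x$ remains proportional. Now pick the largest $|\eps|$ (with sign chosen accordingly) for which all entries of $x$ remain non-negative; at this $\eps$, at least one cycle edge becomes zero, strictly reducing $|E(G(x))|$. Since $|E(G(x))| \le nm$ initially and each iteration can be implemented in polynomial time by finding a cycle via a DFS on $G(x)$, we obtain an acyclic consumption graph after polynomially many iterations.

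The main obstacle is designing the cycle perturbation correctly. A naive alternating $\pm \eps$ shift along the cycle would change each agent $i_j$'s payment by $\eps(p_{c_j} - p_{c_{j-1}})$, which generically breaks proportionality; the weights $\pm \eps/p_{c_j}$ are precisely those that annihilate both the chore-balance rows and the agent-payment rows of the underlying Fisher constraint matrix, making them the canonical ``null cycle'' that simultaneously preserves feasibility and every agent's disutility. Computing the initial competitive equilibrium of the chore market is a nontrivial but well-studied primitive, which I treat as a black box.
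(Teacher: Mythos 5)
Your second step---the cycle-cancelling perturbation with weights $\pm\eps/p_{c_j}$---is correct: it preserves chore balance, every agent's payment (and hence, via the MPB identities on cycle edges, every agent's disutility), and never adds edges, so it eliminates at least one edge per iteration while keeping $(x,p)$ a market equilibrium and $x$ proportional. The proportionality calculation $d_i(x_i)=\alpha_i p(x_i)=\alpha_i w_i\le w_i d_i([m])$ is also correct, \emph{given} a Fisher equilibrium with earning conditions $p(x_i)=w_i$ and $\sum_c p_c = 1$.

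The gap is in your first step. You treat the computation of an exact competitive equilibrium of the weighted Fisher chore market as a known polynomial-time black box, citing \citet{branzei2024algorithms}. That reference does not supply this: their algorithms are polynomial only when the number of agents or the number of chores is a fixed constant, and exact polynomial-time computation of a chore equilibrium for general $n$ and $m$ is not known (unlike the goods case, the set of chore equilibria is non-convex and there is no Eisenberg--Gale-type convex program; only approximation schemes and fixed-parameter algorithms are available). Your argument genuinely needs the earning conditions $p(x_i)=w_i$, since the paper's weaker notion of market equilibrium (MPB conditions alone) does not imply proportionality. The paper circumvents exactly this obstacle: it computes an optimal acyclic solution of the LP that minimizes total disutility subject to the proportionality constraints $d_i(x_i)\le w_i d_i([m])$, so proportionality is enforced directly by feasibility, and it extracts payments from the LP dual, where complementary slackness yields the MPB conditions without any budget condition. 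So while your overall architecture (equilibrium plus cycle cancelling) matches the paper in spirit, the proof as written rests on an unproven computational primitive and does not establish the polynomial-time claim.
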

\begin{proof}[Proof sketch]
Let $x$ be an optimal extreme-point solution to the following LP:
\begin{equation}
\label{eqn:lp-primal}
\optprog{\min_{x \in \mathbb{R}_{\ge 0}^{n \times m}}}{\sum_{i=1}^n\sum_{c=1}^m d_i(c)x_{i,c}}{%
\\[1.3em] \text{where} & d_i(x_i) \le w_id_i([m]) & \forall i \in [n]
\\ \text{and} & \sum_{i=1}^n x_{i,c} = 1 & \forall c \in [m].
}
\end{equation}
Then one can show that $x$ is proportional and $G(x)$ is acyclic.
One can obtain the corresponding payments $p$ by solving the LP's dual.
See \cref{sec:prop-meq-over-forest} for a formal proof.
\end{proof}

Next, we would like to round the allocation obtained from \cref{thm:prop-meq}.
The following result shows how to reduce the problem of rounding this fractional allocation
to the special case of identical disutilities.

\begin{theorem}
\label{thm:reduce-to-idval}
Let $\Ical \defeq ([n], [m], (d_i)_{i=1}^n, (w_i)_{i=1}^n)$ be a bounded fair division instance.
Let $(x, p)$ be a market equilibrium for $\Ical$.
For every chore $c \in [m]$, define $\dhat(c) \defeq p_c/p_{\max}$,
where $p_{\max} \defeq \max_{j=1}^m p_j$.
Let $\Icalhat$ be the fair division instance $([n], [m], (\dhat)_{i=1}^n, (w_i)_{i=1}^n)$
(i.e., every agent has disutility function $\dhat$).
Then $\Icalhat$ is bounded, and for every rounding $A$ of $x$, we have
$\rcost_{\Icalhat}(A \mid x) \ge \rcost_{\Ical}(A \mid x)$.
\end{theorem}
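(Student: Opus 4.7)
The plan is to prove the two claims separately; boundedness of $\Icalhat$ is immediate from the definition, and the rounding-cost inequality follows by rewriting both sides in terms of the payment vector $p$ and the minimum-pain-per-buck scalars $\alpha_i$.

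First I would verify that $\Icalhat$ is bounded. Since $p_c \le p_{\max}$ by definition, we get $\dhat(c) = p_c/p_{\max} \le 1$ for every chore $c$, which is all that is required.

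Next, fix any rounding $A$ of $x$. The key idea is that, because $A$ is a rounding of $x$, every chore $c \in A_i$ satisfies $x_{i,c} > 0$, so by the market equilibrium condition we have $d_i(c) = \alpha_i p_c$. Summing gives $d_i(A_i) = \alpha_i\, p(A_i)$. The same tight equality holds on $x_i$ (only coordinates with $x_{i,c} > 0$ contribute), so $d_i(x_i) = \alpha_i\, p(x_i)$. Subtracting yields
\[ d_i(A_i) - d_i(x_i) = \alpha_i\bigl(p(A_i) - p(x_i)\bigr), \]
and by definition of $\dhat$,
\[ \dhat(A_i) - \dhat(x_i) = \frac{p(A_i) - p(x_i)}{p_{\max}}. \]

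The main (and essentially only) observation is that the minimum-pain-per-buck $\alpha_i$ is bounded by $1/p_{\max}$. Indeed, the equilibrium condition gives $\alpha_i p_c \le d_i(c)$ for every chore $c$, and boundedness of $\Ical$ gives $d_i(c) \le 1$; applying this at any $c$ with $p_c = p_{\max}$ yields $\alpha_i \le 1/p_{\max}$. Therefore
\[ (d_i(A_i) - d_i(x_i))^+ = \alpha_i\bigl(p(A_i) - p(x_i)\bigr)^+ \le \frac{(p(A_i) - p(x_i))^+}{p_{\max}} = (\dhat(A_i) - \dhat(x_i))^+. \]
Summing over $i \in [n]$ gives $\rcost_{\Ical}(A \mid x) \le \rcost_{\Icalhat}(A \mid x)$, as desired.

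There is no real obstacle here; the only thing to be careful about is that the equilibrium equality $d_i(c) = \alpha_i p_c$ must be invoked for all $c \in A_i$ (which is legitimate precisely because $A$ is a rounding of $x$, not an arbitrary integral allocation), and that boundedness of $\Ical$ is used exactly once, to control $\alpha_i$.
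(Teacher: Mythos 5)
Your proof is correct and follows essentially the same route as the paper's: both rely on the equilibrium equality $d_i(c) = \alpha_i p_c$ on the support of $x$ (valid for $A_i$ precisely because $A$ is a rounding of $x$) together with the bound $\alpha_i p_{\max} \le d_i(c^*) \le 1$ from boundedness of $\Ical$, yielding $(d_i(A_i) - d_i(x_i))^+ = \alpha_i p_{\max}\,(\dhat(A_i) - \dhat(x_i))^+ \le (\dhat(A_i) - \dhat(x_i))^+$. The only cosmetic difference is that you phrase the comparison through $p(A_i)$ and $p(x_i)$ while the paper works with the per-chore ratio $d_i(c)/\dhat(c) = \alpha_i p_{\max}$; the content is identical.
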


Before we prove \cref{thm:reduce-to-idval}, let us look at an example

\begin{example}
\label{ex:reduce-to-idval}
Consider a fair division instance $\Ical$ with three agents $a_1$, $a_2$, and $a_3$,
having weights $2/15$, $8/15$, and $1/3$, respectively.
There are three chores $c_1$, $c_2$, and $c_3$, having the following disutilities:

\begin{center}
\begin{tabular}{cccc}
\toprule & $c_1$ & $c_2$ & $c_3$
\\ \midrule
$d_1(\cdot)$ & $1/2$ & $1$ & $1/2$
\\ $d_2(\cdot)$ & $1$ & $1$ & $1$
\\ $d_3(\cdot)$ & $1$ & $2/3$ & $1/3$
\\ \bottomrule
\end{tabular}
\end{center}

Then for $p = (1, 1, 1/2)$ and
\[ x = \begin{pmatrix}
1/3 & \textcolor{dimColor}{0} & \textcolor{dimColor}{0}
\\ 2/3 & 2/3 & \textcolor{dimColor}{0}
\\ \textcolor{dimColor}{0} & 1/3 & 1
\end{pmatrix}, \]
$(x, p)$ is a market equilibrium, with $\alpha = (1/2, 1, 2/3)$
as the minimum-pain-per-buck vector.
Also, $x$ is PROP, since the agents' values for their own bundles are
$(1/6, 4/3, 5/9)$, and their proportional shares are $(4/15, 8/5, 2/3)$.
$G(x)$ is acyclic, and looks like this
(edge between agent $i$ and chore $c$ is labeled with $x_{i,c}$):

\begin{center}
\begin{tikzpicture}
\node[agentNode] (a1) at (0.0, 0.0) {$a_1$};
\node[itemNode]  (c1) at (1.5, 0.3) {$c_1$};
\node[agentNode] (a2) at (3.0, 0.0) {$a_2$};
\node[itemNode]  (c2) at (4.5, 0.3) {$c_2$};
\node[agentNode] (a3) at (6.0, 0.0) {$a_3$};
\node[itemNode]  (c3) at (7.5, 0.3) {$c_3$};

\draw (a1)
    -- node[midway, above] {$1/3$} (c1)
    -- node[midway, above] {$2/3$} (a2)
    -- node[midway, above] {$2/3$} (c2)
    -- node[midway, above] {$1/3$} (a3)
    -- node[midway, above] {$1$} (c3);
\end{tikzpicture}

\end{center}

By \cref{thm:reduce-to-idval}, we get a new instance $\Icalhat$
where the chores have disutilities $(1, 1, 1/2)$ for every agent.
There are 4 possible ways to round $x$ to an integral allocation,
and by inspecting all of them, we conclude that
the allocation $A = (\{\}, \{c_1\}, \{c_2, c_3\})$
incurs a rounding cost of $2/3$ in $\Icalhat$, which is optimal.
Note that $2/3$ is less than the worst-case subsidy bound of $n/3 - 1/6$.

\Cref{thm:reduce-to-idval} tells us that $\rcost_{\Ical}(A \mid x) \le \rcost_{\Icalhat}(A \mid x)$.
Indeed, $\rcost_{\Ical}(A \mid x) = 4/9 < 2/3$.
Moreover, the PROP-subsidy of $A$ is $1/3$, which is less than $4/9$.
\end{example}

\begin{proof}[Proof of \cref{thm:reduce-to-idval}]
Let $\alpha_i$ be the minimum-pain-per-buck for each agent $i$.
For each pair $(i, c)$ such that $x_{i,c} = 0$,
we first reduce $d_i(c)$ such that $d_i(c) = \alpha_i p_c$.
Then each agent $i$'s disutility function becomes proportional to the payment vector $p$,
but $d_i(A_i)$ and $d_i(x_i)$ remain unchanged,
so $\rcost(A \mid x)$ remains unchanged.
Now for each agent $i$, increase $d_i(c)$ to $\max_{i \in [n]} d_i(c)$
so that disutility functions become identical.
Then for each agent $i \in [n]$,
$(d_i(A_i) - d_i(x_i))^+$ either stays the same or increases.

The above process can be succinctly emulated by setting the disutility
of each chore $c$ to $p_c/p_{\max}$. We now formally prove its correctness.

Let $c^* \in \argmax_{c \in [m]} p_c$.
Then for any $i \in [n]$, we have
\begin{equation*}
    \alpha_i p_{c^*} \le d_i(c^*) \le 1.
\end{equation*}
For any $i \in [n]$ and $c \in [m]$ such that $x_{i,c} > 0$,
we have $d_i(c) = \alpha_i p_c$. Therefore,
\[ \frac{d_i(c)}{\dhat(c)} = \frac{\alpha_ip_c}{p_c/p_{c^*}} = \alpha_ip_{c^*} \le 1. \]
Hence,
\begin{align*}
& (\dhat(A_i) - \dhat(x_i))^+
    = \frac{(d_i(A_i) - d_i(x_i))^+}{\alpha_i p_{c^*}}
\ifTwoColumn{\\&\quad}{}
    \ge (d_i(A_i) - d_i(x_i))^+.
\end{align*}
Hence, $\rcost_{\Icalhat}(A \mid x) \ge \rcost_{\Ical}(A \mid x)$.
\end{proof}

Next, in \cref{sec:low-cost-rounding}, we prove the following theorem,
which solves the low-cost rounding problem.

\begin{theorem}
\label{thm:lcr}
Let $\Icalhat$ be a bounded fair division instance
where every agent has the same disutility function.
Let $x$ be any fractional allocation for $\Icalhat$ such that $G(x)$ is acyclic.
Then we can compute a rounding $A$ of $x$ in $O(m+n^2)$ time such that
$\rcost_{\Icalhat}(A \mid x) \le n/3 - 1/6$.
\end{theorem}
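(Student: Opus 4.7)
Since $G(x)$ is acyclic, it is a forest; let its connected components be trees $T_1, \ldots, T_K$ with $n_k$ agents in $T_k$, so $\sum_k n_k = n$. Since the chores and agents partition across trees, the rounding cost decomposes as $\rcost_{\Icalhat}(A \mid x) = \sum_{k=1}^K \rcost_{T_k}(A \mid x)$, so it suffices to bound each per-tree cost by $n_k/3 - 1/6$; summing then yields $n/3 - K/6 \le n/3 - 1/6$ as $K \ge 1$. (The edge case $n=0$ is vacuous.)

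To bound the rounding cost in a single tree $T$, I first delete all chores of degree $1$ in $G(x)$: they are integrally allocated to their unique neighbor and contribute $0$ to $\rcost$. Deleting them and any newly isolated agents leaves a ``core'' subtree $T^*$ on $n^*$ agents and $k^*$ fractional chores, each of degree $\ge 2$; the tree identity $\sum_{c} \deg(c) = n^* + k^* - 1$ together with $\deg(c) \ge 2$ forces $k^* \le n^* - 1$. I would root $T^*$ at an arbitrary agent and process it in post-order, so that whenever a chore $c$ is visited, each child agent $a$ of $c$ has a known current bundle $A_a^{\mathrm{curr}}$ from its own subtree and current excess $\sigma_a := \dhat(A_a^{\mathrm{curr}}) - \dhat(x_a)$. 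The greedy rule at $c$: if some child $a_j$ can absorb $c$ without creating positive excess (i.e., $\sigma_{a_j} + \dhat(c) \le 0$), give $c$ to $a_j$; otherwise, choose among the remaining options (sending $c$ up to the parent, or giving it to one of the children) the one that minimizes the immediate contribution $\sum_a (\dhat(A_a^{\mathrm{final}}) - \dhat(x_a))^+$ over the newly-finalized child agents. The $O(m + n^2)$ running time comes from $O(m)$ preprocessing for integer chores and $O(n^2)$ total work to evaluate the greedy rule across all trees.

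I would prove the per-tree bound $\rcost_T \le n_T/3 - 1/6$ by induction on $n^*$. The base cases are immediate: $n^* \le 1$ gives $\rcost = 0 \le 1/6$; and $n^* = 2$ has a unique fractional chore $c$ with shares $(\alpha, 1-\alpha)$, which assigned to the agent with larger share yields $\rcost \le \min(\alpha, 1-\alpha)\,\dhat(c) \le 1/2 = 2/3 - 1/6$. For the inductive step, I identify a leaf structure in $T^*$ (e.g., a leaf agent, or a degree-$2$ chore whose only child is a leaf), process it via the greedy rule, and invoke the inductive hypothesis on the reduced subtree. The main obstacle is the amortized analysis: showing that each greedy step removes leaf agents at amortized cost at most $1/3$ per removed agent on average, with the $-1/6$ slack absorbing a single boundary term at the root. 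This requires a careful case split on the configuration of the fractional chore incident to the removed leaf (in particular, its degree and the relative sizes of $\dhat(c)$ vs.\ the current excesses), but the induction hypothesis should deliver a clean recurrence that the greedy rule respects.
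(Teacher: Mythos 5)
Your reduction to a single tree, the removal of degree-one chores, the count $k^*\le n^*-1$, and the base cases are all fine, but the heart of the theorem is missing: the inductive/amortized step is only asserted (``the induction hypothesis should deliver a clean recurrence that the greedy rule respects''), and in fact the greedy rule you specify does not satisfy the bound. Concretely, take one tree with a center agent $r$, leaf agents $a_1,\ldots,a_k$, and chores $c_1,\ldots,c_k$ with $\dhat(c_j)=1$ and $x_{r,c_j}=x_{a_j,c_j}=1/2$ for all $j$ (all chores have degree $2$, $n=k+1$, $m=n-1$). Root at $r$ and run your rule: at each $c_j$ the only child is $a_j$ with current excess $\sigma_{a_j}=-1/2$, absorbing $c_j$ would create positive excess $1/2$, and sending $c_j$ up has immediate finalized cost $0$ versus $1/2$ for giving it to $a_j$, so the rule defers every chore to the root. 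The root then has excess $k-k/2=k/2$, so the total cost is $k/2$, exceeding $n/3-1/6=k/3+1/6$ for every $k\ge 2$ (e.g.\ $k=3$ gives $3/2>7/6$); rooting at a leaf instead only improves this to $(k-1)/2$, which still violates the bound for $k\ge 5$. The feasible roundings here spread the chores roughly evenly between the center and the leaves, and no purely myopic ``minimize the cost of the newly finalized agents'' rule will make that choice, so even if you repaired the rule you would still have to supply the entire amortized case analysis you have deferred.

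For comparison, the paper does not process the tree leaf-by-leaf. It partitions the chores of each tree into small chore-disjoint pieces of three types (a single degree-$2$ chore with its two agents; a path of two degree-$2$ chores and three agents; a star around a chore of degree $k\ge 3$ together with $h\le k$ pendant degree-$2$ chores), proves per-piece rounding bounds of $1/2$, $2/3$, and $(k+h-1)/3$ respectively, and then sums them using the edge count $m+n-1$ of the tree; in the all-degree-$2$ case at most one single-chore piece appears (and only when $m$ is odd), which is exactly where the $-1/6$ comes from. Note that already the two-chore bound of $2/3$ (Wu--Zhou, Theorem 4.1) requires comparing ``the middle agent takes both chores'' against splitting them --- a non-local decision of the kind your greedy cannot make --- and the star piece requires choosing the recipient of the central chore based on the aggregate shares of all its neighbors. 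Your per-agent amortization at rate $1/3$ corresponds to the paper's per-edge accounting, but getting it requires these structured, non-myopic choices; as written, your proposal neither proves the recurrence nor uses a rounding rule for which it could hold.
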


Finally, we combine these results to prove our main result.

\begin{theorem}
\label{thm:low-subsidy}
Let $\Ical \defeq ([n], [m], (d_i)_{i=1}^n, (w_i)_{i=1}^n)$ be a bounded fair division instance.
Then we can, in polynomial time, compute an integral fPO allocation $A$
having PROP-subsidy at most $n/3 - 1/6$, i.e.,
\[ \sum_{i=1}^n (d_i(A_i) - w_id_i([m]))^+ \le \frac{n}{3} - \frac{1}{6}. \]
\end{theorem}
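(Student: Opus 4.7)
The plan is to assemble the three prior theorems of the paper in sequence. First, I would invoke \cref{thm:prop-meq} on $\Ical$ to obtain, in polynomial time, a fractional allocation $x$ and payment vector $p$ such that $(x, p)$ is a market equilibrium, $x$ is proportional, and $G(x)$ is acyclic. Next, I would feed $(x, p)$ to \cref{thm:reduce-to-idval}, producing a bounded identical-disutility instance $\Icalhat = ([n], [m], (\dhat)_{i=1}^n, (w_i)_{i=1}^n)$. Finally, since $x$ is also a fractional allocation for $\Icalhat$ with the same (acyclic) consumption graph, I would apply \cref{thm:lcr} to $(\Icalhat, x)$ to obtain, in $O(m+n^2)$ time, a rounding $A$ of $x$ with $\rcost_{\Icalhat}(A \mid x) \le n/3 - 1/6$.

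To bound the PROP-subsidy of $A$, I would chain two inequalities. By \cref{thm:reduce-to-idval},
\[ \rcost_{\Ical}(A \mid x) \;\le\; \rcost_{\Icalhat}(A \mid x) \;\le\; \frac{n}{3} - \frac{1}{6}. \]
Because $x$ is PROP, we have $d_i(x_i) \le w_i d_i([m])$ for every agent $i$, so
\[ (d_i(A_i) - w_i d_i([m]))^+ \;\le\; (d_i(A_i) - d_i(x_i))^+. \]
Summing over $i$ gives that the PROP-subsidy of $A$ is at most $\rcost_{\Ical}(A \mid x) \le n/3 - 1/6$, as required.

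For the fPO guarantee, the key observation is that being a market equilibrium is preserved under rounding. Since $A$ is a rounding of $x$, we have $G(A) \subseteq G(x)$, so for every pair $(i,c)$ with $A_{i,c} = 1$ the tight equation $d_i(c) = \alpha_i p_c$ carries over from $(x, p)$, while the inequality $d_i(c) \ge \alpha_i p_c$ holds for all $c \in [m]$ independent of $A$. Hence $(A, p)$ is itself a market equilibrium with the same minimum-pain-per-buck vector $\alpha$, and the First Welfare Theorem (\cref{thm:first-welfare}) implies $A$ is fPO.

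The real technical content of the result lives in the three earlier theorems, so I do not anticipate a significant obstacle here; the main thing to be careful about is making sure the objects produced by \cref{thm:prop-meq} are correctly repurposed as inputs to \cref{thm:reduce-to-idval,thm:lcr} (in particular, that $\Icalhat$ is bounded and that $x$ remains an acyclic fractional allocation over $\Icalhat$), and then noting that the market equilibrium certificate $p$ survives the rounding step to deliver fPO essentially for free.
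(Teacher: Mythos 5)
Your proposal is correct and follows essentially the same route as the paper: compute the proportional acyclic market equilibrium via \cref{thm:prop-meq}, reduce to identical disutilities via \cref{thm:reduce-to-idval}, round via \cref{thm:lcr}, chain the two inequalities to bound the subsidy, and observe that $(A,p)$ remains a market equilibrium so the First Welfare Theorem gives fPO. Your explicit check that the tight equations $d_i(c)=\alpha_i p_c$ carry over to every edge of $G(A)\subseteq G(x)$ is exactly the justification the paper leaves implicit.
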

\begin{proof}
Let $(x, p)$ and $\Icalhat$ be as defined by \cref{thm:prop-meq,thm:reduce-to-idval}, respectively.
We can compute them in polynomial time.
Using \cref{thm:lcr}, we can compute a rounding $A$ of $x$ in polynomial time such that
$\rcost_{\Icalhat}(A \mid x) \le n/3 - 1/6$. Hence, the total subsidy is at most
\begin{align*}
& \sum_{i=1}^n (d_i(A_i) - w_id_i([m]))^+
\\ &\le \sum_{i=1}^n (d_i(A_i) - d_i(x_i))^+
    \tag{since $x$ is proportional}
\\ &= \rcost_{\Ical}(A \mid x)
    \le \rcost_{\Icalhat}(A \mid x)
    \tag{by \cref{thm:reduce-to-idval}} \\
    & \le \frac{n}{3} - \frac{1}{6}.
    \tag{by \cref{thm:lcr}}
\end{align*}
Moreover, since $A$ is a rounding of $x$, $(A, p)$ is also a market equilibrium.
By the first welfare theorem (\cref{thm:first-welfare} in \cref{sec:prelims:po-meq}),
we get that $A$ is fPO.
\end{proof}

\section{An Acyclic Proportional Market Equilibrium}
\label{sec:prop-meq-over-forest}

Let $\Ical \defeq ([n], [m], (d_i)_{i=1}^n, (w_i)_{i=1}^n)$ be a fair division instance.
We want to find a market equilibrium $(x^*, p^*)$ such that $x^*$ is PROP and $G(x^*)$ is acyclic.

Any optimal solution to linear program \eqref{eqn:lp-primal} (c.f.~\cref{sec:overview}) is PROP and fPO.
We will show that there exists an optimal solution $x^*$ to the LP such that $G(x^*)$ is acyclic,
and we will show how to find payments $p^*$ such that $(x^*, p^*)$ is a market equilibrium.

\begin{lemma}
\label{thm:lp-in-polytime}
LP \eqref{eqn:lp-primal} is feasible. Moreover, we can obtain in polynomial time
a solution $x^*$ to LP \eqref{eqn:lp-primal} such that $G(x^*)$ is acyclic.
\end{lemma}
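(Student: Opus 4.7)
The plan is as follows. Feasibility is immediate: the uniform fractional allocation defined by $x_{i,c} = w_i$ satisfies $\sum_i x_{i,c} = 1$ and $d_i(x_i) = w_i d_i([m])$, so LP~\eqref{eqn:lp-primal} is feasible; since the objective is bounded below by $0$, an optimum exists. I would then solve the LP in polynomial time and take an optimal primal $x^*$ together with optimal dual multipliers $\beta_i^* \ge 0$ for the proportionality constraints and $p_c^* \in \mathbb{R}$ for the chore equalities. Setting $\alpha_i^* \defeq 1/(1+\beta_i^*)$, complementary slackness yields $d_i(c) \ge \alpha_i^* p_c^*$ for every pair $(i,c)$, with equality whenever $x_{i,c}^* > 0$; since every chore is fully allocated, this forces $p_c^* > 0$ for all $c$, so $(x^*, p^*)$ is a market equilibrium with minimum-pain-per-buck $\alpha_i^*$ for each agent.

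If $G(x^*)$ is already acyclic, we are done; otherwise, I would remove cycles one at a time using a \emph{price-weighted} shift. Given a cycle $i_1, c_1, i_2, c_2, \ldots, i_k, c_k, i_1$ in $G(x^*)$, introduce a scalar $K$, set $\xi_j \defeq K/p_{c_j}^*$, and replace $x_{i_j, c_j}$ by $x_{i_j, c_j} + \xi_j$ and $x_{i_{j+1}, c_j}$ by $x_{i_{j+1}, c_j} - \xi_j$ for each $j$ (indices mod $k$). Each chore constraint is preserved term by term, and because every cycle edge is tight in the market equilibrium, agent $i_j$'s disutility changes by
\[ \xi_j d_{i_j}(c_j) - \xi_{j-1} d_{i_j}(c_{j-1}) = \alpha_{i_j}^*\bigl(\xi_j p_{c_j}^* - \xi_{j-1} p_{c_{j-1}}^*\bigr) = \alpha_{i_j}^*(K - K) = 0. \]
Hence proportionality and the LP objective are both preserved. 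Taking $|K|$ as large as non-negativity permits forces at least one cycle entry of $x^*$ to hit zero, strictly reducing $|E(G(x^*))|$. Since $|E(G(x^*))| \le nm$ throughout, after at most $nm$ cancellations the consumption graph is acyclic, and each cancellation (cycle-finding plus the scalar update) is polynomial time, so the overall procedure is polynomial.

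I expect the main obstacle to be this cycle cancellation. A naive uniform shift $\xi_j = \theta$ around a cycle generally violates proportionality for any agent whose constraint is tight, and one can construct instances where neither direction of a uniform shift is feasible. The resolution is to let the dual prices guide the shift: weighting $\xi_j$ inversely by $p_{c_j}^*$ makes the telescoping disutility sum cancel exactly for every agent on the cycle. This is the step where the market equilibrium (specifically the payment vector $p^*$ obtained from LP duality) is genuinely indispensable to the argument, rather than being merely a convenient byproduct used only in later proofs.
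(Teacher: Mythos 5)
Your proof is correct, but it reaches acyclicity by a different route than the paper. The paper proves feasibility with the same uniform allocation $x_{i,c}=w_i$, solves the LP, and then simply invokes Lemma 2.5 of Sandomirskiy and Segal-Halevi to convert the optimum into an optimal solution with acyclic consumption graph in $O(n^2m^2(m+n))$ time, remarking that alternatively an extreme-point optimal solution is already acyclic (with a proof similar to that lemma). You instead make the acyclification explicit and self-contained: you first extract the dual prices and the MPB-tightness of all support edges via complementary slackness (this is exactly the content of the paper's separate follow-up lemma that an optimal primal--dual pair forms a market equilibrium), and then cancel cycles with the price-weighted shift $\xi_j = K/p^*_{c_j}$, which leaves every on-cycle agent's disutility unchanged, hence preserves proportionality, the chore constraints, and optimality, while a maximal choice of $K$ deletes at least one edge; since only already-positive entries are modified, the support only shrinks, so the tightness you need persists across the at most $nm$ iterations. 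This buys you a constructive argument with no external citation, at the cost of needing the dual solution before acyclification; by contrast, the paper's extreme-point alternative gets acyclicity with no reference to prices at all, so your closing claim that the payment vector is ``genuinely indispensable'' to this lemma is an overstatement (it is indispensable to your variant, not to the statement). Two small points you should make explicit: the persistence of edge-tightness in later iterations (immediate from the shrinking support), and that $p^*_c>0$ uses the paper's standing assumption $d_i(c)>0$ for the agent consuming $c$.
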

\begin{proof}
Let $x^{(0)} \in [0, 1]^{n \times m}$, where $x^{(0)}_{i,c} = w_i$ for all $i \in [n]$ and $c \in [m]$.
Then $x^{(0)}$ is a feasible solution to LP \eqref{eqn:lp-primal}.

We can compute an optimal solution $\xhat$ to the LP in polynomial time.
Lemma 2.5 by \citet{sandomirskiy2022efficient} shows how to transform $\xhat$
to another optimal solution $x^*$ in $O(n^2m^2(m+n))$ time such that $G(x^*)$ is acyclic.
Alternatively, one can show that if $\xhat$ is an extreme-point optimal solution to the LP,
then $G(\xhat)$ is acyclic. The proof is very similar to Lemma 2.5 by \citet{sandomirskiy2022efficient}.
\end{proof}

\citet{branzei2024algorithms} prove the \emph{Second welfare theorem} for chores,
which says that for any fPO allocation $x$, there exist payments $p$ such that $(x, p)$ is a market equilibrium.
However, their result only works for cases where $d_i(x_i) > 0$ for all $i \in [n]$,
which is not guaranteed for $x^*$ obtained as in \cref{thm:lp-in-polytime}.
Hence, instead of relying on their result, we directly (and constructively)
prove the existence of payments $p^*$ such that $(x^*, p^*)$ is a market equilibrium.
We do this by considering the dual of LP \eqref{eqn:lp-primal}:
\begin{equation}
\label{eqn:lp-dual}
\optprog{\max_{p \in \mathbb{R}^m,\,h \in \mathbb{R}_{\ge 0}^n}}{\sum_{c=1}^m p_c - \sum_{i=}^n h_iw_id_i([m])}{%
\\ \text{where} & p_c \le (1+h_i)d_i(c) & \forall i \in [n], \forall c \in [m]
}
\end{equation}

\begin{lemma}
Let $x^*$ be an optimal solution to LP \eqref{eqn:lp-primal}
and $(p^*, h^*)$ be an optimal solution to LP \eqref{eqn:lp-dual}.
Then $(x^*, p^*)$ is a market equilibrium.
\end{lemma}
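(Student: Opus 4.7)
The plan is to prove this via LP duality and complementary slackness applied to LPs~\eqref{eqn:lp-primal} and~\eqref{eqn:lp-dual}. By \Cref{thm:lp-in-polytime}, the primal LP is feasible, and since its objective is nonnegative, it is bounded below by $0$. Strong LP duality therefore guarantees that both $x^*$ and $(p^*, h^*)$ attain the common optimal value, and complementary slackness holds between them.

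The key step is the complementary slackness condition pairing the primal nonnegativity $x_{i,c} \ge 0$ with the dual inequality $p_c \le (1+h_i)d_i(c)$, which yields
\[ x^*_{i,c}\bigl[(1+h_i^*)d_i(c) - p^*_c\bigr] = 0 \qquad \forall i \in [n],\, c \in [m]. \]
I would then define $\alpha_i \defeq 1/(1 + h_i^*)$, which is well-defined and lies in $(0,1]$ because $h_i^* \ge 0$. Dual feasibility rewrites as $d_i(c) \ge \alpha_i p^*_c$ for every $i$ and $c$, while the displayed slackness condition yields equality whenever $x^*_{i,c} > 0$. These are precisely the two conditions demanded by the market-equilibrium definition, with $\alpha_i$ serving as agent $i$'s minimum-pain-per-buck.

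The one subtlety I expect to require care is verifying $p^*_c > 0$ for every $c$: the market-equilibrium definition requires $p \in \mathbb{R}^m_{>0}$, but the dual LP treats $p$ as a free variable constrained only from above. I would resolve this by observing that the dual objective contains the term $+\sum_c p_c$ with positive coefficient, so at any dual-optimal solution each coordinate must be pushed to the top of its feasible range, i.e., $p^*_c = \min_{i \in [n]}(1 + h_i^*)d_i(c)$; otherwise $p_c$ could be strictly increased while preserving feasibility, contradicting optimality. Combining this with the standing assumption from \Cref{sec:prelims} that $d_i(c) > 0$ for every $i, c$, together with $1 + h_i^* \ge 1$, forces $p^*_c > 0$, which completes the verification that $(x^*, p^*)$ is a market equilibrium.
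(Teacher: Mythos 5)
Your proposal is correct and follows essentially the same route as the paper: define $\alpha_i \defeq 1/(1+h_i^*)$, read $d_i(c) \ge \alpha_i p_c^*$ off dual feasibility, and obtain equality on pairs with $x_{i,c}^* > 0$ from complementary slackness. The only (minor) divergence is in establishing $p_c^* > 0$: you argue that dual optimality forces $p_c^* = \min_{i}(1+h_i^*)d_i(c) > 0$, whereas the paper notes that every chore is consumed by some agent (since $\sum_i x_{i,c}^* = 1$), so complementary slackness already yields $p_c^* = d_i(c)/\alpha_i > 0$; both arguments are valid.
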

\begin{proof}
For all $i \in [n]$, let $\alpha_i \defeq 1/(1 + h^*_i)$.
For all $i \in [n]$ and $c \in [m]$, feasibility of $(p^*, h^*)$ gives us $d_i(c) \ge \alpha_ip^*_c$.
When $x^*_{i,c} > 0$, we get $p^*_c - (1+h^*_i)d_i(c) = 0$ by complementary slackness, so $d_i(c) = \alpha_ip^*_c$.
For every $c \in [m]$, there exists $i \in [n]$ such that $x^*_{i,c} > 0$, so $p_c = d_i(c)/\alpha_i > 0$.
Hence, $(x^*, p^*)$ is a market equilibrium.
\end{proof}

\section{Low-Cost Rounding}
\label{sec:low-cost-rounding}

In this section, we prove \cref{thm:lcr}.
Given a fractional allocation $x$ such that $G(x)$ is acyclic, we partition $G(x)$ into
a collection of small chore-disjoint trees and round each tree independently.
See \cref{fig:decomp} for a visual representation.
This technique is similar to the \emph{tree-splitting} technique by \citet{wu2024tree},
but differs in two important ways.
(i) We apply tree-splitting to $G(x)$, whereas \citet{wu2024tree} apply it to
    a forest over agents that is defined in terms of the \emph{fractional bid-and-take algorithm}.
(ii) We partition $G(x)$ into trees differently.
    By exploiting the fact that all agents have the same disutility function,
    the trees in our partition are smaller and have a simpler structure.
    This considerably simplifies our rounding algorithm.

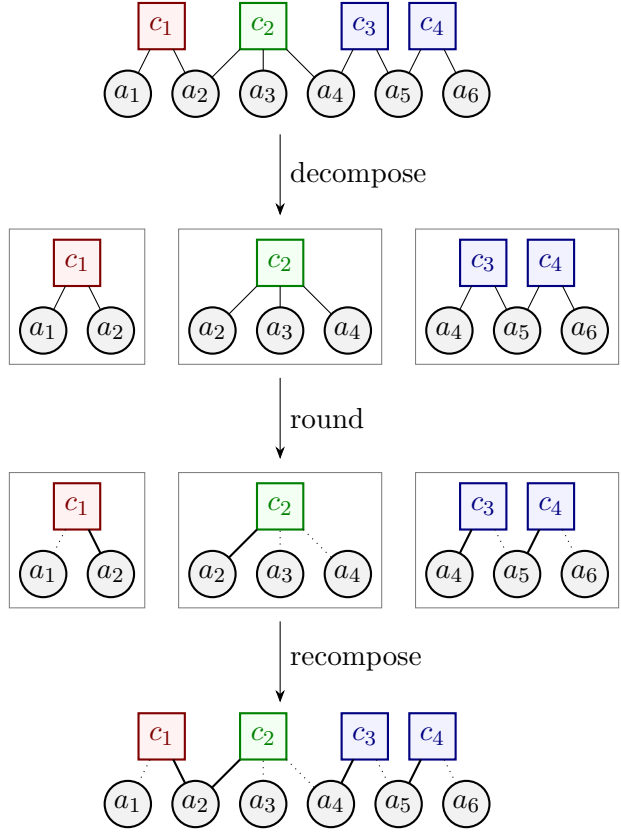
\begin{figure}[htb]
\centering
\begin{tikzpicture}[scale=0.9,every pic/.style={scale=0.9}]

\tikzset{fullGraph/.pic={
\node[agentNode] (a1) at (0.0, 0.0) {$a_1$};
\node[agentNode] (a2) at (1.0, 0.0) {$a_2$};
\node[agentNode] (a3) at (2.0, 0.0) {$a_3$};
\node[agentNode] (a4) at (3.0, 0.0) {$a_4$};
\node[agentNode] (a5) at (4.0, 0.0) {$a_5$};
\node[agentNode] (a6) at (5.0, 0.0) {$a_6$};
\node[itemNode,rNode] (c1) at (0.5, 1.0) {$c_1$};
\node[itemNode,gNode] (c2) at (2.0, 1.0) {$c_2$};
\node[itemNode,bNode] (c3) at (3.5, 1.0) {$c_3$};
\node[itemNode,bNode] (c4) at (4.5, 1.0) {$c_4$};
\draw[lightEdge] (a1) -- (c1);
\draw[heavyEdge] (c1) -- (a2) -- (c2);
\draw[lightEdge] (a3) -- (c2) -- (a4);
\draw[heavyEdge] (a4) -- (c3);
\draw[lightEdge] (a5) -- (c3);
\draw[heavyEdge] (a5) -- (c4);
\draw[lightEdge] (a6) -- (c4);
}}

\tikzset{decompGraph/.pic={
\draw[dimColor] (-0.5, -0.5) rectangle (1.5, 1.5);
\node[agentNode] (a1)  at (0.0, 0.0) {$a_1$};
\node[agentNode] (a2a) at (1.0, 0.0) {$a_2$};
\node[itemNode,rNode] (c1) at (0.5, 1.0) {$c_1$};
\draw[lightEdge] (c1) -- (a1);
\draw[heavyEdge] (c1) -- (a2a);

\draw[dimColor] (2.0, -0.5) rectangle (5.0, 1.5);
\node[agentNode] (a2b) at (2.5, 0.0) {$a_2$};
\node[agentNode] (a3)  at (3.5, 0.0) {$a_3$};
\node[agentNode] (a4a) at (4.5, 0.0) {$a_4$};
\node[itemNode,gNode] (c2) at (3.5, 1.0) {$c_2$};
\draw[heavyEdge] (c2) -- (a2b);
\draw[lightEdge] (a3) -- (c2) -- (a4a);

\draw[dimColor] (5.5, -0.5) rectangle (8.5, 1.5);
\node[agentNode] (a4b) at (6.0, 0.0) {$a_4$};
\node[agentNode] (a5)  at (7.0, 0.0) {$a_5$};
\node[agentNode] (a6)  at (8.0, 0.0) {$a_6$};
\node[itemNode,bNode] (c3) at (6.5, 1.0) {$c_3$};
\node[itemNode,bNode] (c4) at (7.5, 1.0) {$c_4$};
\draw[heavyEdge] (c3) -- (a4b);
\draw[lightEdge] (c3) -- (a5);
\draw[heavyEdge] (c4) -- (a5);
\draw[lightEdge] (c4) -- (a6);
}}

\begin{scope}[xshift={1.25cm}, heavyEdge/.style={}, lightEdge/.style={}]
\pic{fullGraph};
\end{scope}

\draw[myArrow] (3.5, -0.6) -- node[right]{decompose} (3.5, -1.8);

\begin{scope}[yshift={-3.5cm}, heavyEdge/.style={}, lightEdge/.style={}]
\pic{decompGraph};
\end{scope}

\draw[myArrow] (3.5, -4.2) -- node[right]{round} (3.5, -5.4);

\begin{scope}[yshift={-7.1cm}, heavyEdge/.style={thick}, lightEdge/.style={thin,dotted}]
\pic{decompGraph};
\end{scope}

\draw[myArrow] (3.5, -7.8) -- node[right]{recompose} (3.5, -9.0);

\begin{scope}[xshift={1.25cm}, yshift={-10.5cm}, heavyEdge/.style={thick}, lightEdge/.style={thin,dotted}]
\pic{fullGraph};
\end{scope}

\end{tikzpicture}

\caption[Decomposing a tree into 3 subtrees and rounding them independently.]{%
Decomposing a tree into 3 subtrees based on the partition
$(\textcolor{textRed}{\{c_1\}}, \textcolor{textGreen}{\{c_2\}}, \textcolor{textBlue}{\{c_3, c_4\}})$ of the chores,
and then rounding the subtrees independently. Circles represent agents and squares represent chores.
}
\label{fig:decomp}
\end{figure}

\subsection{Formalizing the Decomposition}
\label{sec:lcr:tree-split-defn}

We start by formally defining our analogue of the tree-splitting technique,
and showing that it can be used to decompose the low-cost rounding problem into simpler sub-problems.

Let $x \in [0, 1]^{n \times m}$ be a fractional allocation for the instance
$\Ical \defeq ([n], [m], (d_i)_{i=1}^n, (w_i)_{i=1}^n)$ such that $G(x)$ is a forest.
For any $S \subseteq [m]$, define $\delta_x(S)$ to be the agents adjacent to some chore in $S$, i.e.,
\[ \delta_x(S) \defeq \{i \in [n]: x_{i,c} > 0 \text{ for some } c \in S\}. \]
Define $\Ical_x(S) \defeq (\delta_x(S), S, (d_i)_{i \in \delta_x(S)}, (w'_i)_{i \in \delta_x(S)})$
to be the instance $\Ical$ restricted to chores $S$, where
$(w'_i)_{i \in \delta_x(S)}$ is an arbitrary sequence of positive numbers that sum to 1.

Let $(M_1, \ldots, M_T)$ be a partition of the chores $[m]$.
Define the fractional allocation $x^{(t)}$ as the restriction of $x$ to $\Ical_x(M_t)$, i.e.,
$x^{(t)} \in [0, 1]^{\delta_x(M_t) \times M_t}$, where $x^{(t)}_{i,j} \defeq x_{i,j}$
for all $i \in \delta_x(M_t)$ and $j \in M_t$.
For any rounding $A$ of $x$, define the allocation $A^{(t)}$ as $(A_i \cap M_t)_{i \in \delta_x(M_t)}$.
Then
\begin{align*}
\rcost_{\Ical}(A \mid x) &= \sum_{i=1}^n (d_i(A_i) - d_i(x_i))^+
\\ &= \sum_{i=1}^n \max\left(0, \sum_{t=1}^T \left(d_i(A_i \cap M_t) - \sum_{c \in M_t} d_i(c)x_{i,c}\right)\right)
\\ &\le \sum_{t=1}^T \sum_{i \in \delta_x(M_t)}
    \max\left(0, d_i(A_i \cap M_t) - \sum_{c \in M_t} d_i(c)x_{i,c}\right)
\\ &= \sum_{t=1}^T \rcost_{\Ical_x(M_t)}(A^{(t)} \mid x^{(t)}).
\end{align*}
Hence, to round $x$, we can partition the chores and round each set in the partition independently.

For any $S \subseteq [m]$, define $G(x) \cap S$ to be
the subgraph of $G(x)$ induced by vertices $S \cup \delta_x(S)$.
We will often pick the partition $(M_1, \ldots, M_T)$ such that
$G(x) \cap M_t$ has special structure for each $t \in [T]$,
which would help us round it with a low cost.

\subsection{Decomposition Algorithm}
\label{sec:lcr:tree-split-algo}

Using \cref{sec:lcr:tree-split-defn}, we can assume \wLoG{} that $G(x)$ is a tree,
because if $G(x)$ is a forest, then we can just round each tree independently.

We can also assume that every chore has degree at least 2.
This is because if some chore $c$ is adjacent to just one agent $i$,
then $x_{i,c} = 1$, so it is already integrally allocated,
and we can remove it from consideration.

Since $G(x)$ is a tree, it has $m+n-1$ edges. Since all chores have degree at least 2,
we get
\[ m+n-1 = \sum_{c=1}^m \deg_{G(x)}(c) \ge 2m. \]

So we have $m \le n-1$.

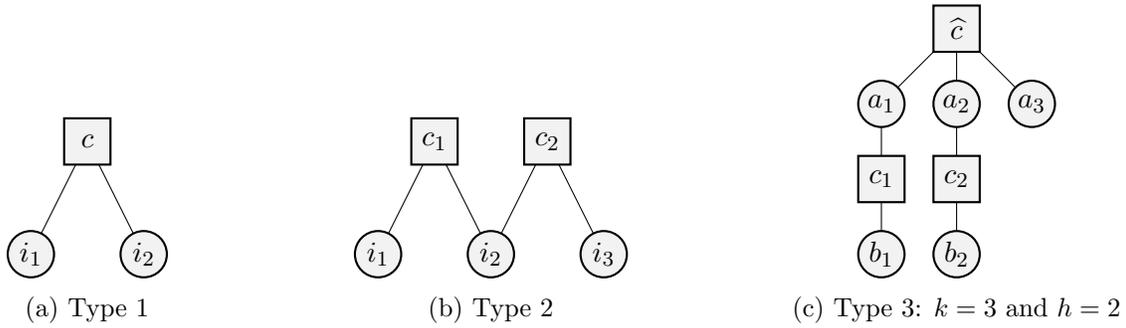
\begin{figure*}[htbp]
\centering
\begin{subfigure}[b]{0.25\textwidth}
\centering
\begin{tikzpicture}
\node[agentNode] (i1) at (0.00, 0.0) {$i_1$};
\node[itemNode]  (c)  at (0.75, 1.5) {$c$};
\node[agentNode] (i2) at (1.50, 0.0) {$i_2$};

\draw (i1) -- (c) -- (i2);
\end{tikzpicture}

\caption{Type 1}
\label{fig:type1}
\end{subfigure}
\hfill
\begin{subfigure}[b]{0.35\textwidth}
\centering
\begin{tikzpicture}
\node[agentNode] (i1) at (0.00, 0.0) {$i_1$};
\node[itemNode]  (c1) at (0.75, 1.5) {$c_1$};
\node[agentNode] (i2) at (1.50, 0.0) {$i_2$};
\node[itemNode]  (c2) at (2.25, 1.5) {$c_2$};
\node[agentNode] (i3) at (3.00, 0.0) {$i_3$};

\draw (i1) -- (c1) -- (i2) -- (c2) -- (i3);
\end{tikzpicture}

\caption{Type 2}
\label{fig:type2}
\end{subfigure}
\hfill
\begin{subfigure}[b]{0.35\textwidth}
\centering
\begin{tikzpicture}
\node[itemNode]  (chat) at (1, 3) {$\chat$};
\node[agentNode] (a1) at (0, 2) {$a_1$};
\node[agentNode] (a2) at (1, 2) {$a_2$};
\node[agentNode] (a3) at (2, 2) {$a_3$};
\node[itemNode]  (c1) at (0, 1) {$c_1$};
\node[itemNode]  (c2) at (1, 1) {$c_2$};
\node[agentNode] (b1) at (0, 0) {$b_1$};
\node[agentNode] (b2) at (1, 0) {$b_2$};

\draw (chat) -- (a1) -- (c1) -- (b1);
\draw (chat) -- (a2) -- (c2) -- (b2);
\draw (chat) -- (a3);
\end{tikzpicture}

\caption{Type 3: $k=3$ and $h=2$}
\label{fig:type3}
\end{subfigure}
\caption{Types of trees. Circles are agents and squares are chores.}
\label{fig:tree-types}
\end{figure*}

We show that a partition $(M_1, \ldots, M_T)$ of the chores always exists such that
for all $t \in [T]$, each subgraph $G(x) \cap M_t$ is one of the following types
(see \cref{fig:tree-types}):
\begin{enumerate}
\item Type 1: A single chore $c$ and two agents $i_1$ and $i_2$ who share it.
\item Type 2: Two chores $\{c_1, c_2\}$ and three agents $\{i_1, i_2, i_3\}$,
    where $i_1$ and $i_2$ share $c_1$, and $i_2$ and $i_3$ share $c_2$.
\item Type 3: A collection of chores $\{\chat, c_1, \ldots, c_h\}$ and a group of agents $\{a_1, \ldots, a_k, b_1, \ldots, b_h\}$,
    where $k \ge 3$ and $0 \le h \le k$. Chore $\chat$ is shared by agents $a_1$, \ldots, $a_k$,
    and chore $c_j$ is shared by agents $a_j$ and $b_j$ for $j \in [h]$.
\end{enumerate}

\begin{lemma}
\label{thm:split-deg-le-2}
Let $x \in [0, 1]^{n \times m}$ be a fractional allocation such that
$G(x)$ is a tree and every chore has degree 2.
Then we can compute a partition of the chores in $O(n^2)$ time such that
each induced subgraph is of type 1 or 2, and at most one subgraph is of type 1.
Moreover, a subgraph of type 1 exists iff $m$ is odd.
\end{lemma}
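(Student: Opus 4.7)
The plan is to reduce this partitioning problem to an equivalent, purely graph-theoretic one on a tree over agents, and then solve it with a single post-order DFS. Since every chore vertex of $G(x)$ has degree exactly $2$, I will smooth each chore $c$ — remove $c$ and add a single edge between its two agent-neighbors. Smoothing degree-$2$ vertices preserves the tree structure, so the resulting graph $H$ is a tree on the vertex set $[n]$ with exactly $m$ edges; consequently $m=n-1$. Under this bijection between chores and edges of $H$, a type-1 subgraph of $G(x)$ corresponds to a single edge of $H$, and a type-2 subgraph corresponds to two edges of $H$ that share a common agent (a path of length $2$). Thus it suffices to partition $E(H)$ into parts, each being either a single edge or a pair of incident edges, using at most one singleton.

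For this, I would root $H$ at an arbitrary vertex $r$ and process vertices in post-order, maintaining for each $v\neq r$ a bit $f(v)\in\{0,1\}$ that is $1$ precisely when the edge from $v$ to its parent has not yet been assigned to any part. For a leaf, set $f(v)=1$. For an internal non-root $v$ with children $c_1,\ldots,c_k$, let $P(v)=\{c_i:f(c_i)=1\}$ be the set of pending child-edges at $v$. If $|P(v)|$ is even, pair these edges up into $|P(v)|/2$ type-2 parts centered at $v$ and set $f(v)=1$; if $|P(v)|$ is odd, pair up $|P(v)|-1$ of them and pair the remaining one with the edge $(v,\mathrm{parent}(v))$, setting $f(v)=0$. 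At the root, pair the pending child-edges up; if one is left over, it becomes the unique singleton part.

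For correctness and the counting claim, observe that by construction singleton parts are only created at the root, and at most one. Since every edge is assigned to exactly one part and each type-2 part uses two edges, we have $2t+s=m$, where $t$ and $s$ are the numbers of type-2 and type-1 parts respectively. As $s\in\{0,1\}$, we get $s=1$ iff $m$ is odd, which is exactly the final claim of the lemma. The algorithm is a single DFS over $G(x)$ (or equivalently over $H$), so it runs in $O(n+m)=O(n)$ time, well within the stated $O(n^2)$ bound.

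The only nontrivial thing to verify is that the greedy bottom-up pairing never gets stuck — and this is handled completely by the parity case distinction at each non-root vertex: the one possibly-leftover pending edge can always be forwarded up to the parent, so the only place any parity imbalance can surface is at the root, where it becomes the at-most-one permitted singleton.
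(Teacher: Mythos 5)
Your proof is correct and uses the same reduction as the paper: contract each degree-$2$ chore into an edge between its two agent-neighbors, obtaining a tree $H$ on the agents with $m=n-1$ edges, so that the task becomes partitioning $E(H)$ into single edges and incident edge-pairs with at most one singleton. The only difference is that the paper delegates this edge-decomposition step to Lemma~4.4 of Wu and Zhou, whereas you prove it directly with an explicit post-order pairing of pending parent-edges; your version is self-contained (and runs in $O(n)$ rather than $O(n^2)$), and your parity argument $2t+s=m$ with $s\in\{0,1\}$ gives the ``type~1 exists iff $m$ odd'' claim exactly as the paper does.
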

\begin{proof}
Let $G' = ([n], E')$ be a graph where $(i, j) \in E$ if agents $i$ and $j$ share a chore,
i.e., $\exists c \in [m]$ such that $x_{i,c} > 0$ and $x_{j,c} > 0$. Then $G'$ is a tree.
Using Lemma 4.4 of \citet{wu2024tree}, in $O(n^2)$ time, we get an edge-disjoint decomposition of $G'$
into subtrees with at most two edges, and at most one such subtree has exactly one edge.
Since edges correspond to chores, and all subtrees having type 2 implies $m$ is even,
the lemma stands proven.
\end{proof}

\begin{lemma}
\label{thm:split-deg-ge-3}
Let $x \in [0, 1]^{n \times m}$ be a fractional allocation such that
$G(x)$ is a tree and some chore has degree at least 3.
Then we can compute a partition of the chores in $O(m^2)$ time such that
each induced subgraph is of type 2 or 3.
\end{lemma}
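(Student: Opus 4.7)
I would prove the lemma by strong induction on the number of chores $m$, processing a deepest chore of degree at least $3$ at each step.

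First, root $G(x)$ at an arbitrary vertex and pick a chore $\chat$ of degree at least $3$ that is deepest, i.e., no strict descendant of $\chat$ in the rooted tree is a chore of degree at least $3$. Such a $\chat$ exists by assumption. By the depth choice, every chore in $\chat$'s subtree other than $\chat$ has degree exactly $2$ in $G(x)$, so each such chore has one parent agent and one child agent in the subtree.

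The main step is to put $\chat$ into a single type-$3$ piece and cover the rest of its subtree by type-$2$ pieces. For each agent-neighbor $a_j$ of $\chat$ that lies below $\chat$, contract every degree-$2$ chore in the subtree below $a_j$ to an edge between its two agent-endpoints; this yields a rooted agent-tree $T_j$ in which each edge corresponds to a chore and each length-$2$ path corresponds to a potential type-$2$ piece. Using \citet{wu2024tree}'s Lemma~4.4 (the same result invoked in \cref{thm:split-deg-le-2}), the edges of $T_j$ decompose into length-$2$ paths with at most one leftover edge; by a standard exchange argument along any root-to-leaf path, I would show that the leftover edge, when it exists, can be arranged to be incident to the root $a_j$. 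The chore corresponding to this leftover is then included in $\chat$'s type-$3$ piece as the satellite $c_j$; otherwise the branch below $a_j$ contributes no satellite and consists entirely of type-$2$ pieces. Since $\chat$ has degree $k\ge 3$, the resulting piece is a valid type-$3$ subgraph.

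After processing $\chat$'s subtree, a smaller tree remains (the original tree with $\chat$'s subtree removed, keeping the agent $p(\chat)$). If it still contains a chore of degree at least $3$, apply the inductive hypothesis; otherwise apply \cref{thm:split-deg-le-2}. The main obstacle is ruling out the type-$1$ piece that \cref{thm:split-deg-le-2} produces when the remaining chore count is odd, since the present lemma forbids type-$1$ pieces entirely. I would handle this by using the remaining flexibility at $\chat$: if the remaining chore count is odd and $p(\chat)$ has a degree-$2$ chore-neighbor $c'\neq\chat$, I also add $c'$ to $\chat$'s type-$3$ piece as an additional satellite, flipping the parity of the remainder so that \cref{thm:split-deg-le-2} yields only type-$2$ pieces. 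The only way this fix fails is if every other chore-neighbor of $p(\chat)$ has degree at least $3$; but in that case the remaining tree itself contains a degree-$\ge 3$ chore, so the inductive hypothesis applies and produces only type-$2$ and type-$3$ pieces. The corner cases where $\chat$ is the root of $G(x)$, or where $p(\chat)$ is a leaf, both force the remaining chore count to be zero and require no parity fix.

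For the running time, the cherry-decomposition on each branch is $O(|T_j|^2)$ via \citet{wu2024tree}'s Lemma~4.4, and all other bookkeeping per recursive call is linear in the current subtree. Summing over the recursion gives $O(m^2)$ total, as claimed.
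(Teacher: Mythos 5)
Your overall strategy---carve a type-3 piece out of a degree-$\ge 3$ chore $\chat$, cover the degree-2 branches with type-2 pieces via the cherry-decomposition lemma, and recurse---is the same as the paper's. The paper roots $G(x)$ at $\chat$ itself, so every branch hangs below some neighbor $a_i$ of $\chat$; it pairs up the odd-size child subtrees of $a_i$ two at a time (each pair together with $a_i$ has an even chore count and splits into type-2 pieces), and the at most one unpaired child chore of $a_i$ becomes the satellite $c_i$. You instead root arbitrarily, take a deepest degree-$\ge 3$ chore, and push the leftover parity upward to $p(\chat)$. That reorganization is where your argument breaks.

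The gap is in the parity fix at $p(\chat)$. When the remaining tree $R$ has all chores of degree $2$ and an odd chore count, you add a degree-2 chore-neighbor $c'$ of $p(\chat)$ to the type-3 piece and claim this ``flips the parity of the remainder so that \cref{thm:split-deg-le-2} yields only type-2 pieces.'' But deleting the degree-2 chore $c'$ disconnects $R$ into \emph{two} trees, and \cref{thm:split-deg-le-2} must be applied to each separately; an even total does not make each component even. Concretely, if $R$ is the tree consisting of the path $p(\chat) - c' - b - c_3 - e$ together with the path $p(\chat) - c_2 - f$ (three chores $c', c_2, c_3$, each of degree $2$), then choosing $c'$ as the extra satellite leaves two components with one chore each, i.e., two type-1 pieces. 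The step can be repaired: in the contracted agent-tree of $R$ rooted at $p(\chat)$ the total edge count is odd, so not every child subtree of $p(\chat)$ can have an odd edge count (otherwise each child would contribute an even number of edges and the total would be even); pick $c'$ to be the edge leading into a child subtree with an even edge count, and then both components of $R - c'$ have even chore counts and decompose into type-2 pieces only. The same parity argument is also what is needed to substantiate your asserted-but-unproved claim that the leftover edge of each branch $T_j$ can be taken incident to its root $a_j$ (your ``standard exchange argument''). With these two points filled in, your proof goes through; the paper's choice of rooting at $\chat$ avoids the issue entirely because there is no upward remainder whose parity needs re-balancing.
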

\begin{proof}
We prove this using induction over the number of chores.
This is trivially true if there are 0 chores.
Now assume at least $m \ge 1$ chores exist,
and every subtree induced by at most $m-1$ chores
can be decomposed into subtrees of types 2 and 3
if some chore has degree at least 3.

Let $\chat$ be a chore of degree $k \ge 3$.
We will color some chores red such that red chores induce a subtree of type 3,
and we will partition the remaining chores such that the induced subtrees have types 2 and 3.
First, root the tree $G(x)$ at $\chat$, and color $\chat$ red.
Let $a_1, \ldots, a_k$ be the children of $\chat$.
Fix $i \in [k]$.
If a descendant chore of $a_i$ has degree at least 3,
decompose the subtree rooted at $a_i$ into
subtrees of types 2 and 3 using the induction hypothesis.

Now assume all of $a_i$'s descendant chores have degree 2.
Let chores $c_1, \ldots, c_{\ell}$ be $a_i$'s children,
and $T_j$ be the subtree rooted at $c_j$. Let
\[ J_{\odd} \defeq \{j \in [\ell]: T_j \text{ has an odd number of chores}\}. \]
For each $j \in [\ell] \setminus J_{\odd}$, decompose $T_j$ into
subtrees of type 2 using \cref{thm:split-deg-le-2}.
Pair up $2\floor{|J_{\odd}|/2}$ chores from $J_{\odd}$.
For each pair $(j_1, j_2)$, $T_{j_1} \cup T_{j_2} \cup \{a_i\}$
is a subtree of $G(x)$ and contains an even number of chores,
so we decompose it into subtrees of type 2 using \cref{thm:split-deg-le-2}.
Now at most one child $c^*$ of $a_i$ remains that hasn't been part of a decomposition.
Color $c^*$ red. Since $c^*$ has degree 2, it has exactly one child agent, which we denote as $b_i$.
The subtree rooted at $b_i$ has an even number of chores (perhaps zero chores),
so decompose it into subtrees of type 2 using \cref{thm:split-deg-le-2}.

Do this for all $i \in [k]$. Now all red chores induce a subtree of $G(x)$ of type 3.
All the remaining chores have been partitioned such that each induced subtree has type 2 or 3.
This completes the inductive step of the proof.
By mathematical induction, the lemma holds for any number of chores.
One can convert this inductive proof into a recursive algorithm that runs in $O(m^2)$ time.
\end{proof}

\subsection{Rounding Small Trees}
\label{sec:lcr:rounding-small-trees}

Let $\Ical$ be a bounded fair division instance having $n$ agents and $m$ chores,
where all agents have the same disutility function $d$.
Given a fractional allocation $x$ such that $G(x)$ is one of the three types
mentioned in \cref{sec:lcr:tree-split-algo}, we will show how to round it.

\begin{lemma}
\label{thm:type-1-round}
If $G(x)$ is a type 1 tree, we can round it with cost at most $1/2$.
\end{lemma}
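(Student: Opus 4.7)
The plan is to handle this by the obvious greedy rule: assign the chore $c$ to whichever of the two agents holds the larger fraction of it, and then bound the resulting rounding cost directly from the definition of $\rcost$.

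Formally, let $G(x)$ consist of one chore $c$ and two agents $i_1,i_2$ with $x_{i_1,c}+x_{i_2,c}=1$. Since agents receiving zero of $c$ in $x$ contribute $0$ to $\rcost$, the only contribution comes from whichever agent ends up with $c$ integrally. If we assign $c$ to agent $i_j$, the rounding cost is
\[
\bigl(d(c)-d(c)\,x_{i_j,c}\bigr)^+ \;=\; d(c)\,(1-x_{i_j,c}) \;=\; d(c)\,x_{i_{3-j},c}.
\]
So I would pick $j \in \{1,2\}$ maximizing $x_{i_j,c}$, i.e.\ the agent with the larger fractional share. Then $x_{i_{3-j},c} \le 1/2$, and combined with the bounded-instance assumption $d(c)\le 1$ this yields $\rcost \le d(c)\cdot 1/2 \le 1/2$.

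There is no real obstacle here: the statement is essentially the textbook ``round to the heavier side'' argument for splitting a single item between two agents, and the bounded instance hypothesis ($\dhat(c)\le 1$ in the reduced instance) together with the tie-breaking inequality $\min(x_{i_1,c},x_{i_2,c})\le 1/2$ immediately gives the bound. I would just state the tie-break rule, write one line for the cost, and conclude.
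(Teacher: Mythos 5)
Your proposal is correct and is essentially identical to the paper's proof: both allocate $c$ to the agent with the larger fractional share, observe that the other agent's share is at most $1/2$ since the shares sum to $1$, and conclude $\rcost \le d(c)/2 \le 1/2$ using boundedness. Nothing is missing.
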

\begin{proof}
Since $G(x)$ is a type 1 tree, a single chore $c$ is shared among agents 1 and 2.
\WLoG{}, assume $x_{1,c} \ge x_{2,c}$.
Since $x_{1,c} + x_{2,c} = 1$, we get $x_{1,c} \ge 1/2 \ge x_{2,c}$.
Allocate $c$ to agent 1. Then the rounding cost is
$d(c)(1 - x_{1,c}) \le d(c)/2 \le 1/2$.
\end{proof}

\begin{lemma}[Theorem 4.1 by \citet{wu2024tree}]
\label{thm:type-2-round}
If $G(x)$ is a type 2 tree, we can round it with cost at most $2/3$.
\end{lemma}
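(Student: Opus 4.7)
}
The plan is to enumerate the four possible roundings of the type 2 tree and show that the minimum rounding cost is at most $2/3$ by a short contradiction argument. Let $u \defeq d(c_1)$ and $v \defeq d(c_2)$, both in $(0, 1]$ by boundedness. Let $\alpha \defeq x_{i_1, c_1}$, so that $x_{i_2, c_1} = 1 - \alpha$, and $\beta \defeq x_{i_3, c_2}$, so that $x_{i_2, c_2} = 1 - \beta$. Then the fractional disutilities are $d(x_{i_1}) = \alpha u$, $d(x_{i_2}) = (1-\alpha)u + (1-\beta)v$, and $d(x_{i_3}) = \beta v$. Every rounding of $x$ assigns $c_1$ entirely to $i_1$ or $i_2$, and $c_2$ entirely to $i_2$ or $i_3$, so there are exactly four candidate roundings.

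The key change of variables is to introduce $a \defeq (1-\alpha)u$, $b \defeq (1-\beta)v$, $p \defeq \alpha u$, $q \defeq \beta v$, so that $a + p = u \le 1$ and $b + q = v \le 1$. A short direct computation will show that the four rounding costs simplify to
\begin{align*}
R_1 &\defeq \rcost(c_1 \to i_1,\ c_2 \to i_2) = \max(a, b), \\
R_2 &\defeq \rcost(c_1 \to i_1,\ c_2 \to i_3) = a + q, \\
R_3 &\defeq \rcost(c_1 \to i_2,\ c_2 \to i_2) = p + b, \\
R_4 &\defeq \rcost(c_1 \to i_2,\ c_2 \to i_3) = \max(p, q),
\end{align*}
where the two $\max$ expressions come from collapsing a term of the form $x + (y-x)^+$ for agent $i_2$ in roundings $R_1$ and $R_4$. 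This symmetric form is where the argument becomes transparent.

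To finish, I plan to argue by contradiction: suppose $\min(R_1,R_2,R_3,R_4) > 2/3$. Then $R_1 > 2/3$ forces $a > 2/3$ or $b > 2/3$, and $R_4 > 2/3$ forces $p > 2/3$ or $q > 2/3$. The combinations $(a, p)$ and $(b, q)$ are each ruled out by $a + p \le 1$ and $b + q \le 1$. The remaining cases are $(a, q)$ and $(b, p)$: in the first, $p = u - a < 1/3$ and $b = v - q < 1/3$, giving $R_3 = p + b < 2/3$, a contradiction; the second case is symmetric and contradicts $R_2 > 2/3$. Hence some $R_k \le 2/3$, and picking that rounding proves the lemma. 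The main (and only real) obstacle is finding the clean reparameterization via $a,b,p,q$ that makes the case analysis collapse to two line-long contradictions; once that reformulation is in hand, the proof is essentially mechanical.
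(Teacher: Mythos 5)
The paper does not prove \cref{thm:type-2-round} at all --- it imports the statement as Theorem 4.1 of Wu and Zhou --- so your self-contained argument is a genuine addition rather than a rederivation of anything in the paper. Your approach (enumerate the four roundings, reparameterize, close with a short contradiction) is sound and the case analysis is exactly the right one. One notational slip should be fixed: with your own definitions ($a=(1-\alpha)u$ and $p=\alpha u$ are $i_2$'s and $i_1$'s fractional costs from $c_1$; $b=(1-\beta)v$ and $q=\beta v$ are $i_2$'s and $i_3$'s fractional costs from $c_2$), the four costs come out as $R_1=a+(q-a)^+=\max(a,q)$, $R_2=a+b$, $R_3=p+q$, and $R_4=b+(p-b)^+=\max(b,p)$; that is, your displayed formulas have $b$ and $q$ systematically interchanged relative to your definitions. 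Since the only constraint involving them, $b+q=v\le 1$, is symmetric under this swap, the argument survives verbatim after relabeling: if all four costs exceeded $2/3$, the pairs $(a,p)$ and $(b,q)$ are ruled out by $a+p\le 1$ and $b+q\le 1$; the case $a,b>2/3$ forces $p,q<1/3$ and hence $R_3<2/3$; and the case $p,q>2/3$ forces $a,b<1/3$ and hence $R_2<2/3$. So the proof is correct once the labels are made consistent. It also has the minor virtue of being tailored to the identical-disutility setting in which the paper actually invokes the lemma, whereas Wu and Zhou prove a more general statement; the price is that your argument, as written, does not recover their full Theorem 4.1.
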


\begin{lemma}
\label{thm:min-max}
For any real numbers $a$, $b$, and $c$,
we have
\begin{equation*}
    r \defeq \min(c+a, \max(c, b)) \le \max(c, (a+b+c)/2).
\end{equation*}
\end{lemma}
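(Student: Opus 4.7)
The plan is a straightforward case analysis driven by the inner $\max(c, b)$. The two sides of the inequality both have $c$ as an alternative inside an outer maximum, so intuitively whenever $b \le c$ the inequality collapses to something trivial, and whenever $b > c$ the right-hand side's other branch $(a+b+c)/2$ carries the load via the elementary bound $\min(x,y) \le (x+y)/2$.

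Concretely, I would split into two cases. Case 1 is $b \le c$, in which $\max(c,b) = c$, so $r = \min(c+a, c) \le c \le \max\bigl(c, (a+b+c)/2\bigr)$. Case 2 is $b > c$, in which $\max(c,b) = b$, and then
\begin{equation*}
r = \min(c+a, b) \le \frac{(c+a) + b}{2} = \frac{a+b+c}{2} \le \max\!\left(c, \frac{a+b+c}{2}\right),
\end{equation*}
using that the minimum of two real numbers is at most their average. Together these cases cover all possibilities and prove the claim.

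The argument has no real obstacle; the only thing to be careful about is not trying to prove the bound without splitting on the sign of $b-c$, since the trivial $r \le (c+a+\max(c,b))/2$ from averaging the two constraints on $r$ gives $c + a/2$ when $b \le c$, which is not directly comparable to the right-hand side when $a > 0$. Splitting on the inner $\max$ avoids this issue entirely and makes each case a one-line check.
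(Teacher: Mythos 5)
Your proposal is correct and is essentially identical to the paper's proof: both split on whether $b \le c$ or $b \ge c$, bound $r$ by $c$ in the first case, and use $\min(c+a,b) \le ((c+a)+b)/2$ in the second. No further comment is needed.
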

\begin{proof}
If $b \le c$, then $r = \min(c+a, c) \le c$.
If $b \ge c$, then $r = \min(c+a, b) \le (a+b+c)/2$.
\end{proof}

\begin{lemma}
\label{thm:type-3-round}
If $G(x)$ is a type 3 tree, we can round it with cost at most $(k+h-1)/3$.
\end{lemma}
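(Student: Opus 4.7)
The plan is to prove the bound via averaging over the choice of $j^* \in [k]$, the agent who receives the hub chore $\chat$. Let $\alpha_j \defeq x_{a_j,\chat}$ (so $\sum_{j=1}^k \alpha_j = 1$), $\beta_j \defeq x_{a_j,c_j}$, $\hat d \defeq d(\chat)$, and $d_j \defeq d(c_j)$; all lie in $(0,1]$ by boundedness. The algorithm tries each $j^* \in [k]$: it assigns $\chat$ to $a_{j^*}$, and for each $j \in [h]$ independently picks $\sigma_j \in \{0,1\}$ (encoding whether $c_j$ goes to $a_j$ or $b_j$) to minimize the local contribution, then outputs the $j^*$ yielding the smallest total cost. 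For fixed $j^*$, the cost decomposes as $g(j^*) + \sum_{j \in [h],\, j \neq j^*} f(j)$, where $f(j) = \min\bigl(\max(0, (1-\beta_j)d_j - \alpha_j \hat d),\; \beta_j d_j\bigr)$ is the optimal pair cost for non-hub spokes and $g(j^*)$ is the analogous quantity for the hub spoke (which also absorbs $\chat$).

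Applying \cref{thm:min-max} with $c = (1-\alpha_{j^*})\hat d$, $a = (1-\beta_{j^*}) d_{j^*}$, $b = \beta_{j^*} d_{j^*}$ gives $g(j^*) \le (1-\alpha_{j^*})\hat d + E_{j^*} \boolone[j^* \le h]$ where $E_j \defeq \max(0, (d_j - (1-\alpha_j)\hat d)/2)$. A second application with $c = 0$ (using $\beta_j d_j \ge 0$) yields $f(j) \le F_j \defeq \max(0, (d_j - \alpha_j \hat d)/2)$. Summing over $j^* \in [k]$ and using $\sum_{j=1}^k (1-\alpha_j) = k - 1$:
\[ \sum_{j^* = 1}^k \mathrm{cost}(j^*) \;\le\; (k-1)\hat d \;+\; \sum_{j \in [h]} \bigl(E_j + (k-1) F_j\bigr). \]

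The main obstacle is a sufficiently tight bound on $E_j + (k-1)F_j$, since the naive $\le 1/2$ per term blows up the total. I would split into the cases $d_j > \hat d$ and $d_j \le \hat d$: in the first, direct computation (using $d_j \le 1$) gives $E_j + (k-1) F_j \le (k - \hat d(1 + (k-2)\alpha_j))/2$; in the second, the individual bounds $E_j \le \alpha_j \hat d/2$ and $F_j \le (1-\alpha_j)\hat d/2$ combine via $\hat d \le 1$ to yield the same upper bound. Summing this and dropping the non-negative term $(k-2)\hat d \sum_{j \in [h]} \alpha_j / 2$ (valid because $k \ge 3$), then maximizing the resulting linear function of $\hat d$ over $[0,1]$ (the maximum occurs at $\hat d = 1$ since $h \le k \le 2(k-1)$), gives $\sum_{j^*} \mathrm{cost}(j^*) \le (k-1)(h+2)/2$. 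Dividing by $k$, the minimum $\mathrm{cost}(j^*)$ is at most $(k-1)(h+2)/(2k)$. The proof concludes with the algebraic check that $(k-1)(h+2)/(2k) \le (k+h-1)/3$ for $k \ge 3$ and $0 \le h \le k$; this rearranges to $2k^2 - 8k + 6 - h(k-3) \ge 0$, which vanishes at $k = 3$ (independent of $h$) and at the worst case $h = k$ simplifies to $(k-2)(k-3) \ge 0$. Polynomial-time computability is immediate, since $O(kh)$ work suffices across all choices of $j^*$.
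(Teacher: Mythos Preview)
Your proof is correct and takes a genuinely different aggregation strategy from the paper. Both proofs use \cref{thm:min-max} in the same way to bound the per-spoke contributions (your $E_j$ and $F_j$ are exactly the paper's bounds \eqref{eqn:type-3-round:chat} and \eqref{eqn:type-3-round:no-chat}), but they diverge after that. The paper sorts the $y_i$'s, commits to one of two specific hub-recipients ($a_1$ maximizing $y_i$ among $i\le h$, or $a_k$ maximizing $y_i$ among $i>h$), and finishes with a four-way case split on $h\in\{0,\;1{\le}h{\le}k{-}2,\;k{-}1,\;k\}$. You instead average the cost over all $k$ choices of $j^*$, prove the uniform per-spoke bound $E_j+(k-1)F_j\le (k-\hat d(1+(k-2)\alpha_j))/2$ (valid in both the $d_j>\hat d$ and $d_j\le \hat d$ regimes), and close with the algebraic inequality $(k-1)(h+2)/(2k)\le (k+h-1)/3$. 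Your route avoids the case analysis and the sorting assumption, at the price of a slightly weaker intermediate bound (e.g., at $h=k$ you get $(k-1)(k+2)/(2k)$ versus the paper's $k/2$) and a final polynomial check; the paper's route is more explicit about which agent actually receives $\chat$ and yields tighter intermediate constants, but needs four separate arguments. Both give the same $(k+h-1)/3$ conclusion.
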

\begin{proof}
Recall that in a type 3 tree, the set of chores is $\{\chat, c_1, \ldots, c_h\}$,
and the set of agents is $\{a_1, \ldots, a_k, b_1, \ldots, b_h\}$.
For $i \le k$, define $y_i \defeq x_{a_i,\chat}$.
For $i \le h$, define $z_i \defeq x_{a_i,c_i}$.
We first characterize the subsidy required to round each $c_i$, for $i\leq h$, depending on whether agent $a_i$ receives $\chat$.

Suppose $i \le h$ and $a_i$ doesn't receive $\chat$.
If $a_i$ receives $c_i$, the cost of rounding $c_i$ is
\begin{equation*}
    \left( d(c_i)(1-z_i) - y_id(\chat) \right)^+.
\end{equation*}
If $a_i$ doesn't receive $c_i$, the cost of rounding $c_i$ is $d(c_i)z_i$.
By \cref{thm:min-max} (wth $c=0$), the smaller of these two costs is at most
\begin{equation}
\label{eqn:type-3-round:no-chat}
\max\left(0, \frac{d(c_i) - d(\chat)y_i}{2}\right) \le \frac{1 - d(\chat)y_i}{2}.
\end{equation}

Suppose $i \le h$ and $a_i$ receives $\chat$.
If $a_i$ also receives $c_i$, the cost of rounding $c_i$ is
\begin{equation*}
    (1-y_i)d(\chat) + (1-z_i)d(c_i).
\end{equation*}
If $a_i$ doesn't receive $c_i$, the cost of rounding $c_i$ is
\begin{equation*}
    ((1-y_i)d(\chat) - z_id(c_i))^+ + z_id(c_i) = \max((1-y_i)d(\chat), z_id(c_i)).
\end{equation*}
By \cref{thm:min-max}, the smaller of these two costs is at most
\begin{align}
&\max\left((1-y_i)d(\chat), \frac{(1-y_i)d(\chat)+d(c_i)}{2}\right)
\ifTwoColumn{\nonumber\\ &}{}
\le \frac{1 + (1-y_i)d(\chat)}{2}.
\label{eqn:type-3-round:chat}
\end{align}

Assume \wLoG{} that the sequence $(y_i)_{i=1}^h$ is non-increasing,
and the sequence $(y_i)_{i=h+1}^k$ is non-decreasing.
Let $p \defeq \sum_{i=1}^h y_i$ and $q \defeq \sum_{i=h+1}^k y_i$. Then $p + q = 1$.
If $h \ge 1$ and we allocate $\chat$ to agent $a_1$, the total cost of rounding is at most
\begin{align}
&\underbrace{\frac{1+(1-y_1)d(\chat)}{2}}_{\text{using \eqref{eqn:type-3-round:chat}}}
    + \sum_{i=2}^h \underbrace{\frac{1-y_id(\chat)}{2}}_{\text{using \eqref{eqn:type-3-round:no-chat}}}
\ifTwoColumn{\nonumber\\ &}{}
= \frac{h}{2} + d(\chat)\cdot \frac{1-p}{2}.
\label{eqn:type-3-round:1}
\end{align}

If $h < k$ and we allocate $\chat$ to agent $a_k$, the total cost of rounding is at most
\begin{align}
&\sum_{i=1}^h \underbrace{\frac{1-y_id(\chat)}{2}}_{\text{using \eqref{eqn:type-3-round:no-chat}}} + d(\chat)(1-y_k)
\ifTwoColumn{\nonumber\\ &}{}
= \frac{h}{2} + d(\chat)\left(-\frac{p}{2} + 1 - y_k\right).
\label{eqn:type-3-round:k}
\end{align}

We now get four cases depending on the value of $h$:
\begin{enumerate}
\item $h = 0$:
    Then $p = 0$ and $y_k \ge 1/k$. Allocate $\chat$ to agent $a_k$. Rounding cost is at most
    \begin{equation*}
        h/2 + d(\chat)(1-y_k) \le 1-1/k.
    \end{equation*}

\item $1 \le h \le k-2$:
    Allocate $\chat$ to agent $a_1$. Rounding cost is at most $(h+1)/2$.
\item $h = k-1$:
    Then $p = 1-y_k$. If we allocate $\chat$ to agent $a_1$, rounding cost is at most $h/2 + d(\chat)y_k/2$.
    If we allocate $\chat$ to agent $a_k$, rounding cost is at most $h/2 + d(\chat)(1-y_k)/2$.
    The better choice has rounding cost at most $h/2 + 1/4$.
\item $h = k$:
    Then $p = 1$. Allocate $\chat$ to agent $a_1$. Rounding cost is at most $h/2$.
\end{enumerate}
Hence, the rounding cost is at most
\begin{align*}
& \frac{h}{2} + \begin{cases}
    (k-1)/k & \text{ if } h = 0
    \\ 1/2 & \text{ if } 1 \le h \le k-2
    \\ 1/4 & \text{ if } h = k-1
    \\ 0 & \text{ if } h = k
    \end{cases}
\ifTwoColumn{\\ &}{\;}
\le \frac{h}{2} + \frac{k-1-h/2}{3} = \frac{k+h-1}{3}.
\qedhere
\end{align*}
\end{proof}

\subsection{Summing the Cost Over Subtrees}
\label{sec:lcr:sum}

Now that we have partitioned the chores and bounded the rounding cost for each induced subtree,
let us find the rounding cost for the entire instance.

\begin{lemma}
\label{thm:rcost-deg-le-2}
Let $x \in [0, 1]^{n \times m}$ be a fractional allocation for
a bounded fair division instance $\Ical$
where all agents have the same disutility function.
If $G(x)$ is a tree and every chore has degree 2,
we can compute a rounding $A$ of $x$ in $O(n^2)$ time such that
$\rcost_{\Ical}(A \mid x) \le n/3 - 1/6$.
\end{lemma}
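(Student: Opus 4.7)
The plan is to combine the decomposition of \cref{thm:split-deg-le-2} with the per-subtree rounding bounds of \cref{thm:type-1-round,thm:type-2-round}, and then exploit the inequality from \cref{sec:lcr:tree-split-defn} to sum the costs across the partition.

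First, I would observe that since $G(x)$ is a tree on $n+m$ vertices it has exactly $n+m-1$ edges, and since every chore has degree exactly $2$, summing degrees over chores gives $\sum_{c \in [m]} \deg_{G(x)}(c) = 2m$. But this sum equals the number of edges, so $n+m-1 = 2m$, yielding $m = n-1$. This pin down of $m$ in terms of $n$ is the key counting fact that drives the final arithmetic.

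Next, I would invoke \cref{thm:split-deg-le-2} to compute, in $O(n^2)$ time, a partition $(M_1,\ldots,M_T)$ of the chores such that every induced subgraph $G(x) \cap M_t$ is of type $1$ or type $2$, with at most one type-$1$ subgraph, and a type-$1$ subgraph appears if and only if $m$ is odd. For each part I apply either \cref{thm:type-1-round} (cost at most $1/2$) or \cref{thm:type-2-round} (cost at most $2/3$) to produce a rounding $A^{(t)}$ of $x^{(t)}$; gluing these together gives a rounding $A$ of $x$. By the inequality derived in \cref{sec:lcr:tree-split-defn},
\[
\rcost_{\Ical}(A \mid x) \;\le\; \sum_{t=1}^T \rcost_{\Ical_x(M_t)}(A^{(t)} \mid x^{(t)}).
\]

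Finally I would split into two cases according to the parity of $m = n-1$. If $m$ is even, then all $T = m/2$ parts are of type $2$, giving
\[
\rcost_{\Ical}(A \mid x) \;\le\; \frac{m}{2}\cdot\frac{2}{3} \;=\; \frac{n-1}{3} \;\le\; \frac{n}{3} - \frac{1}{6}.
\]
If $m$ is odd, then exactly one part is of type $1$ and the remaining $(m-1)/2$ parts are of type $2$, so
\[
\rcost_{\Ical}(A \mid x) \;\le\; \frac{1}{2} + \frac{m-1}{2}\cdot\frac{2}{3} \;=\; \frac{1}{2} + \frac{n-2}{3} \;=\; \frac{n}{3} - \frac{1}{6}.
\]
In both cases the bound $n/3 - 1/6$ holds. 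There is no serious obstacle here: the only thing to watch is the tight parity case, where the odd-$m$ count is exactly what drives the $-1/6$ slack and confirms that \cref{thm:split-deg-le-2}'s "at most one type-$1$ subgraph" is precisely what is needed. The running time is dominated by the $O(n^2)$ decomposition step, since rounding each constant-sized subtree is $O(1)$ and there are $O(m) = O(n)$ of them.
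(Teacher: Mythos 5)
Your proposal is correct and follows essentially the same route as the paper: decompose via \cref{thm:split-deg-le-2}, apply the per-subtree bounds of \cref{thm:type-1-round,thm:type-2-round}, sum using the inequality from \cref{sec:lcr:tree-split-defn}, and use the degree count $m = n-1$ to finish. The paper merely phrases the subtree total as a uniform bound of $m/3 + 1/6$ instead of your explicit parity case split, but the arithmetic is identical.
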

\begin{proof}
Using \cref{thm:split-deg-le-2}, we can partition the chores in $O(n^2)$ time such that
each induced subgraph has type 1 or 2, and at most one subgraph has type 1.
By \cref{thm:type-1-round}, we pay a cost of $1/2$ for each chore in a type 1 subtree,
and by \cref{thm:type-2-round}, we pay a cost of $1/3$ for each chore in a type 2 subtree.
Hence, the total rounding cost is at most $m/3 + 1/6$.

Since each chore has degree 2, the number of edges is $2m$.
Since $G(x)$ is a tree, the number of edges is $m+n-1$.
Hence, $m = n-1$, so the total rounding cost is at most $n/3 - 1/6$.
\end{proof}

\begin{lemma}
\label{thm:rcost-deg-ge-3}
Let $x \in [0, 1]^{n \times m}$ be a fractional allocation for
a bounded fair division instance $\Ical$
where all agents have the same disutility function.
If $G(x)$ is a tree and some chore has degree at least 3,
we can compute a rounding $A$ of $x$ in $O(n^2)$ time such that
$\rcost_{\Ical}(A \mid x) \le (n-1)/3$.
\end{lemma}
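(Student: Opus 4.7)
The plan is to apply \cref{thm:split-deg-ge-3} to partition the chores of $G(x)$ into chore-disjoint subtrees of types 2 and 3, round each subtree using \cref{thm:type-2-round} or \cref{thm:type-3-round}, and then sum the per-subtree costs via a double-counting identity. First I would preprocess by integrally allocating (at zero cost) any degree-1 chore to its unique adjacent agent; this preserves the hypothesis that some chore has degree at least $3$, and afterwards every remaining chore has degree at least $2$. Since $2m \le m + n - 1$ gives $m \le n - 1$, the $O(m^2)$ decomposition from \cref{thm:split-deg-ge-3} runs in $O(n^2)$ time.

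The analytic core is a simple edge-counting identity over the subtrees. Let $n_2$ be the number of type-2 subtrees and index the type-3 subtrees by $j$ with parameters $(k_j, h_j)$. Reading off \cref{fig:tree-types}, a type-2 subtree has $2$ chores and $4$ edges, while a type-3 subtree has $h_j + 1$ chores and $k_j + 2 h_j$ edges. Because every edge of $G(x)$ has exactly one chore endpoint and the decomposition partitions the chores, it also partitions the edges; combined with the fact that $G(x)$ is a tree on $n + m$ vertices, this gives
\[ 2 n_2 + \sum_j (h_j + 1) = m \quad \text{and} \quad 4 n_2 + \sum_j (k_j + 2 h_j) = m + n - 1. \]
Subtracting these two equations yields the clean identity $2 n_2 + \sum_j (k_j + h_j - 1) = n - 1$, which is precisely what is needed to close the cost bound.

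Finally, summing the per-subtree bounds (at most $2/3$ per type-2 subtree by \cref{thm:type-2-round} and at most $(k_j + h_j - 1)/3$ per type-3 subtree by \cref{thm:type-3-round}) gives
\[ \rcost_{\Ical}(A \mid x) \;\le\; \frac{1}{3}\Bigl(2 n_2 + \sum_j (k_j + h_j - 1)\Bigr) \;=\; \frac{n-1}{3}. \]
I expect the only delicate step to be the edge-count bookkeeping; the per-subtree bounds are designed precisely so that each one is $1/3$ times the summand appearing in the identity, so once the identity is established the conclusion is immediate. The overall running time is dominated by the $O(n^2)$ decomposition step.
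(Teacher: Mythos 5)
Your proposal is correct and follows essentially the same route as the paper: the paper's proof defines the ``weight'' of a subtree as its number of edges minus its number of chores, observes that each type-2 or type-3 subtree is rounded at cost at most one third of its weight, and sums to get $(e-m)/3 = (n-1)/3$, which is exactly your double-counting identity written more compactly. Your explicit bookkeeping of $n_2$, $k_j$, $h_j$ and the preprocessing of degree-1 chores (which the paper handles once in \cref{sec:lcr:tree-split-algo}) are both consistent with the paper's argument.
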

\begin{proof}
Using \cref{thm:split-deg-le-2}, we can partition the chores in $O(n^2)$ time
such that each induced subgraph has type 2 or 3.
Define the weight of a tree to be the number of edges minus the number of chores.
By \cref{thm:type-2-round,thm:type-3-round}, the rounding cost of
each type-2 or type-3 subtree is at most $1/3$ of the subtree's weight.
Let $e$ be the number of edges in $G(x)$.
Then the total rounding cost is at most $(e - m)/3$.
Since $G(x)$ is a tree, $e = m + n - 1$,
so the total rounding cost is at most $(n-1)/3$.
\end{proof}

If $G(x)$ is a forest, we can round each tree separately,
so using \cref{thm:rcost-deg-le-2,thm:rcost-deg-ge-3},
we get that the total rounding cost is at most $n/3 - 1/6$.
This completes the proof of \cref{thm:lcr}.

\section{Conclusion}
\label{sec:conclusion}

We give a polynomial-time algorithm for allocating chores among
agents with additive disutilities and potentially unequal weights.
Our algorithm's output is not only fPO, but also matches the
best-known PROP-subsidy upper-bound of $n/3 - 1/6$ by \citet{wu2024tree}.

Our result is much simpler than \citet{wu2024tree}
due to our clever use of market equilibrium.
We used the payment vector to reduce the low-cost rounding problem to
a special case where agents have identical disutility functions.
This significantly simplified the rounding algorithm,
and we believe that this simplification can aid in the design of
algorithms with even better upper bounds on subsidy.

The PROP-subsidy problem has been studied for goods too.
The lower bound of $n/4$ continues to hold for goods,
and \citet{wu2024tree} give an algorithm whose subsidy is at most $n/3 - 1/6$.
We tried to extend our approach to goods too, but our reduction to
identical valuation functions only works for chores, not goods.
For chores, making disutilities proportional to payments involves \emph{decreasing} disutilities,
but for goods, we must \emph{increase utilities}, which can destroy the instance's boundedness.

More general versions of the problem offer interesting directions for future work,
like when agents's disutilities are not additive,
or when the items are a mix of goods and chores.
Another direction to explore is to consider fairness notions other than PROP,
like MMS \cite{budish2011combinatorial,procaccia2014fair} or APS \cite{babaioff2023fair}.
It is not known whether a sublinear subsidy suffices for these notions.

\end{document}